\def\h{{\bf h}}
\def\g{{\bf g}}
\newcommand{\bea}{\begin{eqnarray}}
\newcommand{\eea}{\end{eqnarray}}
\def\bi{\begin{itemize}}
\def\ei{\end{itemize}}
\def\bc{\begin{center}}
\def\ec{\end{center}}
\def\C{\hbox{$\mit I$\kern-.7em$\mit C$}}
\def\R{\hbox{$\mit I$\kern-.6em$\mit R$}}
\def\N{\hbox{$\mit I$\kern-.6em$\mit N$}}
\def\ket#1{|#1\rangle}
\newcommand{\one}{\mbox{$1 \hspace{-1.0mm}  {\bf l}$}}
\def\tr{\mathrm{tr}}
\def\ket#1{\left| #1\right>}
\def\bra#1{\left< #1\right|}
\def\bk#1{\langle #1 \rangle}
\newtheorem{theorem}{Theorem}
\newtheorem{corollary}[theorem]{Corollary}
\newtheorem{lemma}[theorem]{Lemma}
\newtheorem{observation}[theorem]{Observation}
\begin{document}

\author{}

\author{J.I. de Vicente}
\affiliation{Departamento de Matem\'aticas, Universidad Carlos III de
Madrid, Legan\'es (Madrid), Spain}
\author{C. Spee}
\affiliation{Institute for Theoretical Physics, University of
Innsbruck, Innsbruck, Austria}
\author{D. Sauerwein}
\affiliation{Institute for Theoretical Physics, University of
Innsbruck, Innsbruck, Austria}
\author{B. Kraus}
\affiliation{Institute for Theoretical Physics, University of
Innsbruck, Innsbruck, Austria}
\title{Entanglement manipulation of multipartite pure states with finite rounds
of classical communication}

\begin{abstract}
We studied pure state transformations using local operations assisted by finitely many rounds of classical communication ($LOCC_{\N}$) in C. Spee, J.I. de Vicente, D. Sauerwein, B. Kraus, arXiv:1606.04418 (2016).
Here, we first of all present the details of some of the proofs and generalize the construction of examples of state transformations via $LOCC_{\N}$ which require a probabilistic step. However, we also present explicit
examples of SLOCC classes where any separable transformation can be realized by a protocol in which each step is deterministic (all-det-$LOCC_{\N}$). Such transformations can be considered as natural generalizations of
bipartite transformations. Furthermore, we provide examples of pure state transformations which are possible via separable transformations, but not via $LOCC_{\N}$.
We also analyze an interesting genuinely multipartite effect which we call locking or unlocking the power of other parties.
This means that one party can prevent or enable the implementation of LOCC transformations by other parties. Moreover, we investigate the maximally entangled set restricted
to $LOCC_{\N}$ and show how easily computable bounds on some entanglement measures can be derived by restricting to $LOCC_{\N}$.

\end{abstract}
\maketitle

\section{Introduction}
\label{secintro}

Quantum information theory has been successfully developed in the last decades and has shown that the non-classical features of quantum mechanics can be used to realize revolutionary technologies \cite{nielsenchuang}. These very promising applications include quantum communication, computation and simulation. Entanglement plays a crucial role in these quantum advantages, which has led to the development of entanglement theory \cite{reviews}. This theory studies the properties of entangled states and aims at understanding the ultimate possibilities and limitations of these states as a resource. However, while bipartite entanglement is fairly well understood, much more questions remain open in the multipartite realm. Although a handful of tasks have been already identified in this context, such as measurement-based quantum computation \cite{RaBr01}, metrology \cite{reviewmet} or secret-sharing \cite{SecretSh}, a deeper understanding of the complex structure of multipartite entangled states and of the fundamental protocols for their manipulation seems indispensable in order to find new truly multipartite applications of quantum information theory. Moreover, the entanglement properties of multipartite states are actively being exploited for the study of condensed-matter systems like, for example, in phase transitions \cite{AmFa08} or to devise numerical methods \cite{orus}.

The concept of transformations implementable by local operations assisted by classical communication (LOCC) plays a central role in entanglement theory \cite{reviews}. First of all, this is because, entanglement being a property of systems with many constituents, this is the most general form of manipulation for distant parties (that only share classical channels). Each party can implement locally any form of quantum dynamics, i.e.\ a completely positive map. However, the particular map to be implemented may depend on the outcomes of previous measurements carried out by other parties, an information that can be shared through the use of classical communication. Thus, LOCC protocols are made out of sequential rounds in which a given party implements locally a completely positive map via a quantum measurement using a particular Kraus decomposition and informs the others of the outcome of such measurement, which conditions the subsequent measurements to be carried out. With this, LOCC transformations provide the basis for all possible protocols for the manipulation of entangled states and, therefore, characterize the ultimate potential of quantum states for implementing quantum information tasks. Second and most importantly, entanglement theory is a resource theory where the free operations are those which can be realized precisely via LOCC. This means that LOCC convertibility induces the only operationally meaningful ordering in the set of entangled states. If $\rho$ can be transformed into $\sigma$ by LOCC, then $\rho$ is at least as useful as $\sigma$, as any application of the latter can be achieved by the former but not necessarily the other way around. This is because if the parties are provided with $\rho$, they can transform it to $\sigma$ at no cost and then carry on with the protocol. Thus, the study of LOCC transformations reveals the relative usefulness of the different quantum states and underpins the quantification of entanglement. Entanglement measures must be quantities that respect this ordering, i.e.\ that do not increase under these transformations \cite{reviews}.

Unfortunately, it turns out that the structure of LOCC maps is mathematically very subtle \cite{chitambar1}. Remarkably, it has been shown that there exist transformations involving ensembles of states that require infinitely
many rounds of classical communication \cite{chit11}. This result shows that the very intricate structure of successive rounds cannot be simplified a priori in the study of general protocols.
These difficulties have led to consider the larger class of separable (SEP) maps \cite{rains}. This set is strictly larger than LOCC \cite{sepnotlocc} and does not have any known operational interpretation but is
mathematically more tractable. Thus, convertibility under SEP operations is a necessary condition for LOCC convertibility. Notwithstanding, there are certain situations in which it is enough to consider simple LOCC
protocols to characterize LOCC convertibility. Reference \cite{LoPopescu} has shown that for transformations among pure bipartite states it suffices to consider protocols with one round of classical communication.
That is, it is enough that one party implements a quantum measurement upon whose result the other party only needs to implement a local unitary (LU) transformation. Interestingly, this simplification allowed to characterize
all LOCC transformations among pure bipartite states \cite{nielsen}. Moreover, it was later shown that in this setting SEP transformations are exactly as powerful as LOCC \cite{gheorghiu}. Recent works have studied LOCC
convertibility among pure multipartite states in the simplest possible settings of 3-qubit, 4-qubit and 3-qutrit states \cite{turgutghz,turgutw,dVSp13,SaSc15,SpdV16,HeSp15}. The techniques some of us have used there is to use the results of \cite{Gour} to characterize convertibility under SEP. Then, one can usually see that the possible SEP transformations can be indeed implemented by simple LOCC protocols that extend naturally those
which are sufficient in the bipartite case. All possible LOCC transformations among pure states identified so far fall into this category that we call all-deterministic $LOCC_{\N}$ (all-det-$LOCC_{\N}$).
These are protocols in which every round is deterministic. That is, upon every non-trivial measurement carried out by a party, the remaining parties can implement LUs conditioned on the outcome in such a way that at the end of the round one has the same state independently of the outcome of the measurement. This implies that every intermediate step in the protocol is a deterministic LOCC transformation as well. This extends the bipartite protocols with the only difference that in this case more parties might implement non-trivial measurements. Given these results, it would be tempting to conjecture that it suffices to consider all-det-$LOCC_{\N}$ protocols to characterize LOCC transformations among pure multipartite states, which could allow to solve this problem completely as it happened in the bipartite case. Still, it is worth mentioning that, contrary to the bipartite case, SEP transformations among pure multipartite states (3-qutrit states) have been identified that cannot be implemented by LOCC \cite{HeSp15}.
Some conditions that can allow to decide when a SEP protocol is implementable by LOCC have been derived in \cite{cohen} and references therein.

In \cite{short}, to which this article is the companion, we have studied LOCC transformations among pure multipartite states with finitely many rounds of classical communication.
We call these protocols $LOCC_{\N}$. First, this is a more realistic scenario of protocols implementable in practice.
Second, this is the natural approach to complement our previous efforts that relied on SEP. While SEP approximates the set of LOCC maps from the outside,
$LOCC_{\N}$ considers a very general set of protocols that approximate LOCC maps from the inside. Actually, it is not even known if infinite-round LOCC protocols are strictly
more powerful than $LOCC_{\N}$ for pure-state manipulation. Moreover, we provide for the first time general techniques to address convertibility under this very general class of LOCC maps.
The key observation in this case is that all $LOCC_{\N}$ protocols must terminate and, hence, every branch of the LOCC protocol has to finish with a deterministic measurement by one party.
This puts severe constraints on the possible transformations as we have shown in \cite{short}. Interestingly, our techniques apply to a very general class of multipartite states of arbitrary
dimensions and arbitrary number of subsystems. In particular, it has been shown that almost every $n$-qubit state ($n>3$) \cite{bookWallach} and almost all three-qutrit states \cite{HeSp15} belong to this class.

In Sec. \ref{sec:results} we give a detailed
outline of this paper and concisely summarize our results. On the one hand, we provide full
proofs of many results outlined in \cite{short} and on the other hand, we generalize and extend the analyis of $LOCC_{\N}$ transformations of pure states. In particular, we present general results on these transformations and
analyze a genuinely multipartite phenomenon which we called locking and unlocking the power of others in \cite{short}. Moreover, we investigate the maximally entangled set (MES) \cite{dVSp13} if restricted to $LOCC_{\N}$
and derive easily computable bounds on certain entanglement measures. Furthermore, we identify classes of states where any pure state transformation within SEP can be realized by LOCC
via an all-det-$LOCC_{\N}$ protocol. In contrast to that, we provide a general construction of state transformations that show that all-det-$LOCC_{\N}$ protocols are not sufficient to implement
any $LOCC_{\N}$ pure state transformation.

\section{Outline and results}
\label{sec:results}
We investigate $LOCC_{\N}$ transformations among truly $n$-partite entangled pure states $\ket{\Psi}, \ket{\Phi} \in \C^{d_1}\otimes \ldots \otimes \C^{d_n}$
whose single-subsystem reduced states have full rank. We consider states that are elements of the same stochastic LOCC (SLOCC) class, represented by a state $\ket{\Psi_s}$. Moreover, we consider SLOCC classes for which the
local stabilizer of $\ket{\Psi_s}$, i.e. the set of all local invertible matrices $S = S^{(1)} \otimes \ldots \otimes S^{(n)}$ for which $S\ket{\Psi_s} = \ket{\Psi_s}$,
contains only finitely many elements (see Sec. \ref{secprel} for details). These classes include, e.g., almost all three-qutrit and almost all $n$-qubit states, for $n>3$. We call a
state $\ket{\Phi}$ reachable via $LOCC_{\N}$,
in short $LOCC_{\N}$-reachable, if there is a (LU-inequivalent) state $\ket{\Psi}$ from which $\ket{\Phi}$ can be obtained
via $LOCC_{\N}$. In \cite{short} we provided a very simple characterization of all $LOCC_{\N}$-reachable states (see also Sec. \ref{secreachable}). A particularly simple class of $LOCC_{\N}$ protocols are all-det-$LOCC_{\N}$.
There, in the first step (round) e.g. $A$ performs a measurement and the other parties apply, depending on the measurement outcome LUs to transform the initial state, $\ket{\Psi}$ into a
state $\ket{\Psi_1}$ deterministically. In the second step e.g. $B$ applies a measurement (and the other parties LUs) to transform $\ket{\Psi_1}$ into $\ket{\Psi_2}$, etc.
The protocol proceeds in this way until the final state is reached. That is, in each step a deterministic transformation is realized. As mentioned before, all previously identified pure state transformations are
(up to our knowledge) of this simple form. Moreover, in the bipartite setting, any transformation can be realized by a all-det-$LOCC_{\N}$. In fact, for pure state transformations we have in the bipartite case
that $all-det-LOCC_{\N} = LOCC_{\N} = LOCC = SEP$. However, in the multiparite setting this simple relations change to $all-det-LOCC_{\N} \subsetneq LOCC_{\N} \subseteq LOCC \subsetneq SEP$, highlighting the difference
between bipartite and multipartite entanglement. The inequivalence between $LOCC$ and $SEP$ has been shown in \cite{HeSp15}, whereas the fact that not any $LOCC_{\N}$ is of the simple form of a all-det-$LOCC_{\N}$ is shown
in \cite{short}. Here, apart from providing the detailed proofs of the results presented in \cite{short}, we also present a general method of constructing examples of pure state transformations for which
more sophisticated $LOCC_{\N}$ protocols are required than all-det-$LOCC_{\N}$. Moreover, we analyze important features which only occur in the multipartite setting, characterize important sets of states and derive bounds on certain entanglement measures using the characterization of $LOCC_{\N}$--reachable states.
Below we list and explain the results presented here in more detail.\\

\paragraph{States reachable via $LOCC_{\N}$ (Sec. \ref{secreachable})}\hfill\\
We first review the characterization of $LOCC_{\N}$-reachable states (Theorem 1) in SLOCC classes with finite stabilizer given in \cite{short}. A corollary of Theorem 1 stated in \cite{short} is that almost no $n$-qubit
state ($n>3$) is reachable via $LOCC_{\N}$.
 In Sec. \ref{secreachable} of this work we present a detailed
proof of this result. We then use Theorem 1 to find an example of a four-qubit state transformation that can be implemented via SEP, but not via $LOCC_{\N}$.
This shows that the result of \cite{SaSc15} that $LOCC_{\N} = SEP$ for generic four-qubit states does not generalize to all four-qubit states.\\

\paragraph{States convertible via all-det-$LOCC_{\N}$ (Sec. \ref{secconvertible})}\hfill\\
All-det-$LOCC_{\N}$ protocols are considerably
more structured than general LOCC protocols. Yet,
all LOCC transformations among pure multipartite (and bipartite) states, that we are aware of, can also be performed via an all-det-$LOCC_{\N}$ protocol. In \cite{short} we gave a characterization of
all-det-$LOCC_{\N}$ transformations within SLOCC classes with finite stabilizer. In
\ref{secreachable} of this work we provide a full analysis of
these results and provide rigorous proofs thereof.\\

\paragraph{Locking and unlocking of power (Sec. \ref{seclocking})}\hfill\\
We analyze an interesting genuinely multipartite effect that we have only briefly outlined in \cite{short}: locking and unlocking the power of other parties. This means that one party, $i$, can
prevent or enable the transformation of a pure state via a one-round all-det-$LOCC_{\N}$ protocol in which only one party, $j$, acts nontrivially. That is, party $i$ can prevent or enable party $j$ to
start an all-det-$LOCC_{\N}$ transformation. We investigate under which conditions these phenomena can occur. Moreover, we provide explicit examples of unlocking the power. We also show that there are classes
of states in which unlocking is not possible and provide
explicit examples thereof.\\

\paragraph{The Maximally entangled set under $LOCC_{\N}$ (Sec. \ref{secmes})}\hfill\\
In \cite{dVSp13} the concept of the maximally entangled set (MES) has been introduced, which can be considered as the natural generalization of the maximally entangled state. The MES is defined as the minimal set of states
of a quantum system from which all other truly $n$-partite entangled pure states can be obtained deterministically
via LOCC. While the bipartite MES contains (up to LUs) only the maximally entangled state, it can contain infinitely many states and even be of full measure \cite{dVSp13} in the multipartite case. In Sec. \ref{secmes}
we analyze the MES if the set of operations is restricted to $LOCC_{\N}$. Moreover, we use the characterization of all-det-$LOCC_{\N}$ to identify all states in the MES restricted to
$LOCC_{\N}$ that are convertible via all-det-$LOCC_{\N}$ protocols to an LU-inequivalent state.\\

\paragraph{Estimation of entanglement measures (Sec. \ref{secmes})}\hfill\\
In \cite{ScSa15} two operational entanglement measures have been introduced, the source and the accessible
entanglement. The former characterizes the difficulty of generating the state at hand, and the latter the potentiality of a state to generate other states via LOCC.
In this work, we show that one can define similar functions that quantify the
resourcefulness of a quantum state under all-det-$LOCC_{\N}$.
Clearly, the difficulty of generating a
state via all-det-$LOCC_{\N}$ is at least as high as via LOCC as all-det-$LOCC_{\N} \subseteq LOCC$.
Similarily, the potentiality of a state to generate other states via all-det-$LOCC_{\N}$ is at most as high as via $LOCC$. In Sec. \ref{secmes} we use these insights in combination with our characterization of all-det-$LOCC_{\N}$ to derive bounds on the accessible and the source entanglement.\\

\paragraph{Examples of SLOCC classes where all-det-$LOCC_{\N}$ protocols are sufficient (Sec. \ref{secpaulis})}\hfill\\
As explained above, it holds that $all-det-LOCC_{\N} \subsetneq LOCC_{\N} \subseteq LOCC \subsetneq SEP$. However, in Sec. \ref{secpaulis} we explicitly construct SLOCC classes of $2^m$-qubit states ($m\geq2$) for which $all-det-LOCC_{\N} = SEP$ holds.
Hence, all $LOCC$ transformations within these classes can be characterized using the results on all-det-$LOCC_{\N}$ of Sec. \ref{secconvertible}. For example, this makes it easy to derive entanglement measures for
pure states for these classes, as we explain in Sec. \ref{secpaulis}.\\

\paragraph{All-det-$LOCC_{\N}$ protocols are not sufficient for $LOCC_{\N}$ pure-state transformations (Sec. \ref{secexample})}\hfill\\
As mentioned before, all LOCC transformations of multipartite pure states (including the bipartite case) that have, to our knowledge, been identified so far, can be realised by a relatively simple all-det-$LOCC_{\N}$ protocol.
However, in \cite{short} we have shown that this is not always the case, i.e. that $all-det-LOCC_{\N} \subsetneq LOCC_{\N}$. More precisely, we have
considered a particular example of a pure state transformation that can be performed via $LOCC_{\N}$ but requires a non-deterministic intermediate step, and hence cannot be performed via all-det-$LOCC_{\N}$. In Sec. \ref{secexample}
of this work
we discuss a general construction that allows to obtain such examples.
These results show that the transformation of multipartite entanglement can require LOCC protocols that are more sophisticated than the bipartite protocols. Moreover, it shows again that a characterization of all
multipartite LOCC transformations is very complex and may stay elusive.

\section{Preliminaries}
\label{secprel}

We consider transformations among pure truly $n$-partite entangled states of the same dimensions. That is, both the input and target state fulfill
$\ket{\Psi}, \ket{\Phi} \in \C^{d_1}\otimes \ldots \otimes \C^{d_n}\equiv \mathcal{H}=\otimes_i\mathcal{H}_i$ (with $d_i$ denoting the dimension of subsystem $i$) and are entangled in the same dimensions,
i.e. the single-subsystem reduced density matrices have full rank. Moreover, we consider states in the same stochastic LOCC (SLOCC) class \cite{slocc}, which means that there exist invertible matrices $A_i$ for $i\in \{1,\ldots ,n\}$ such that $\ket{\Psi}\propto A_1\otimes \ldots \otimes A_n\ket{\Phi}$.

Any trace-preserving completely positive map $\Lambda$ acting on states $\rho$ on $\mathcal{H}$ admits a Kraus representation
\begin{equation}\label{cpmap}
\Lambda(\rho)=\sum_iX_i\rho X_i^\dag,
\end{equation}
where the Kraus operators $\{X_i\}$ fulfill the normalization condition $\sum_iX_i^\dag X_i=\one$. Rather than maps we usually refer to quantum measurements represented by the positive-operator valued measure (POVM) elements $\{X_i^\dag X_i\}$. The map $\Lambda$ is said to be SEP if all the (properly normalized) Kraus operators are product operators on $\mathcal{H}$, i.e.\
\begin{equation}\label{sepmap}
X_i=X_i^{(1)}\otimes \ldots \otimes X_i^{(n)}\quad\forall i,
\end{equation}
where $X_i^{(j)}$ acts on $\C^{d_j}$. The map $\Lambda$ is said to be $LOCC_{\N}$ with $m\in\N$ rounds of classical communication if we can split the index of the map into a multi-index
with $m$ entries $i=(i_1\cdots i_m)$ and the Kraus operators factorize according to
\begin{widetext}
\begin{equation}\label{loccmap}
X_i=X_{(i_1\cdots i_m)}=\prod_{k=1}^m U_{i_k}^{(1)}(\{i_j\}_{j<k})\otimes\cdots \otimes X_{i_k}^{(s)}(\{i_j\}_{j<k})\otimes\cdots\otimes U_{i_k}^{(n)}(\{i_j\}_{j<k}),
\end{equation}
\end{widetext}
where the order of the product in $k$ is from right to left. This means that the action of the parties at each round $k$ of the protocol (i.e.\ the corresponding map to be implemented) depends on the previous indices $\{i_j\}_{j<k}$.
Moreover, the party or parties, $s$, that act non-unitarily in round $k$ depend as well on $\{i_j\}_{j\leq k}$, i.e.\ $s=s(\{i_j\}_{j\leq k})$ (which we do not explicitly write to ease the notation) \footnote{Note that the fact that $s$ may depend on $i_k$ reflects that in the round $k$ several parties might act with certain probability.}. These parties have to implement a proper POVM and, hence, the normalization
\begin{equation}
\sum_{i_k}(X_{i_k}^{(s)}(\{i_j\}_{j< k}))^\dag X_{i_k}^{(s)}(\{i_j\}_{j< k})=\one
\end{equation}
has to hold at every branch of the protocol corresponding to the round $k$ (i.e.\ for all values of $\{i_j\}$ such that $j<k$). The round finishes by the remaining parties
implementing a unitary transformation $U_{i_k}^{(l)}(\{i_j\}_{j<k})$ ($l\neq s$) determined not only by all previous indices $\{i_j\}_{j<k}$ but also by the outcome at the present round $i_k$.
We say that a transformation from $\ket{\Psi}$ to $\ket{\Phi}$ can be implemented by SEP or $LOCC_{\N}$ whenever there exist maps as specified above such that $\Lambda(\ket{\Psi}\bra{\Psi})=\ket{\Phi}\bra{\Phi}$. Notice that LU transformations,
\begin{equation}
\ket{\Phi}=U^{(1)}\otimes\cdots\otimes U^{(n)}\ket{\Psi},
\end{equation}
are invertible LOCC transformations that are always possible and do not change the entanglement of a state. Thus, we only consider transformations among different LU-equivalence classes even if we refer to states. In this way, whenever there exist (SEP or $LOCC_{\N}$) protocols such that $\Lambda(\ket{\Psi}\bra{\Psi})=\ket{\Phi}\bra{\Phi}$ and the states are not LU-equivalent we say that the state $\ket{\Psi}$ is convertible and the state $\ket{\Phi}$ is reachable (by SEP or $LOCC_{\N}$).

The general class of multipartite states that we are going to consider here is given by the following condition. They belong to an SLOCC class such that it has a representative $\ket{\Psi_s}$ that has a \emph{finite} local stabilizer group \cite{Gour}. The local stabilizer of a state $\ket{\Psi_s}$, $S_{\Psi_s}$, is the group of all product invertible matrices $S$ such that
\begin{equation}
S\ket{\Psi_s}=\ket{\Psi_s}\textrm{ with }S=S^{(1)} \otimes S^{(2)}\otimes \ldots \otimes S^{(n)}.
\end{equation}
It has been moreover shown in \cite{Gour} that if the stabilizer is finite, one can always find a representative of the class $\ket{\Psi_s}$ such that all elements of the stabilizer (which we also call symmetries) correspond to LUs. Thus, without loss of generality, we always consider this to be the case. Every state which is in the same SLOCC class as $|\Psi_s\rangle$ can then be written (ignoring normalization) as $g\ket{\Psi_s}$, with $g=\otimes_{i=1}^n g_i$ where $g_i\in GL(d_i)$ $\forall i$. Obviously, the operator $g$ is not unique for a given LU-equivalence class. For this reason, we consider the positive operators $G_i=g_i^\dag g_i$ that lead to $G=\otimes_{i=1}^n G_i$. With this, the operator $G$ is uniquely given by the LU-equivalence class up to conjugation by elements of the stabilizer group $S_k^\dag G S_k$. Without loss of generality, the normalization of the states is chosen such that the operators $\{G_i\}$ have all unit trace. In the particular case of $n$-qubit states, to refer to the states we often use the $\R^3$ vectors $\{\g_i\}$ arising from the Bloch representation
\begin{equation}
G_i=\frac{\one}{2}+\g_i\cdot\vec\sigma,
\end{equation}
where $\vec\sigma=(\sigma_1,\sigma_2,\sigma_3)$ and $\{\sigma_i\}$ represent the Pauli matrices (we also use $\sigma_0=\one$). Notice that it must hold that $||\g_i||<1/2$ $\forall i$ in order to guarantee the positiveness of $G_i$.

Let us also review here the results of \cite{Gour} that characterize SEP conversions among pure states. Whenever we study a transformation inside a given SLOCC class, we always denote by $g\ket{\Psi_s}$ the initial state and by $h\ket{\Psi_s}$ the final state (the operators $\{H_i\}$ and vectors $\{\h_i\}$ are given on the analogy to the above definitions). A transformation by SEP among pure states in an SLOCC class with unitary stabilizer group $S_{\Psi_s}$ is possible if and only if (iff) there exists a probability distribution $\{p_k\}$ over the finite set of symmetries such that
\begin{equation}\label{condsep}
\sum_kp_kS_k^\dag HS_k=G.
\end{equation}
This shows the fundamental role played by the stabilizer group in this context. The intuition behind this result is that the only way to obtain a SEP map such that
\begin{equation}
\Lambda\left(\frac{g\ket{\Psi_s}\bra{\Psi_s}g^\dag}{\tr
(g\ket{\Psi_s}\bra{\Psi_s}g^\dag)}\right)=\frac{h\ket{\Psi_s}\bra{\Psi_s}h^\dag}{\tr(h\ket{\Psi_s}\bra{\Psi_s}h^\dag)}
\end{equation}
is that the Kraus operators (as given by Eqs.\ (\ref{cpmap})-(\ref{sepmap})) fulfill
\begin{equation}
X_k=\sqrt{p_k}h_1 S_k^{(1)} g_1^{-1}\otimes \ldots \otimes h_n S_k^{(n)} g_n^{-1},
\end{equation}
for some symmetries $\{S_k\}\in S_{\Psi_s}$. Equation (\ref{condsep}) ensures then the proper normalization of the map. Notice that the set of operators $H$ which can be reached by SEP from $G$ is convex. That is, the states which are reachable from a state $g\ket{\Psi_s}$ are defined by the convex set
$$\{H: \exists \{p_k\geq0\}, \sum_k p_k=1: \sum_k p_k S_k^\dagger H S_k =G\}.$$

To conclude this preliminary section let us discuss the generality of the states considered here, i.e.\ those that belong to an SLOCC class with finite (and hence unitary) stabilizer.
First of all, it has been shown that generic $n$-qubit states with $n>3$ belong to SLOCC classes with this property \cite{bookWallach}. In the more general case of multiqudit states some of the SLOCC
classes also possess a representative which has only finitely many local symmetries. Moreover, as for instance in the case of 3-qutrit states, the union of these SLOCC classes can be generic too \cite{BrLu04}.
Notwithstanding, considering arbitrary SLOCC classes, it should be stressed that it is not clear, even if the stabilizer is known to be finite, whether it is the case that there exist non-trivial symmetries
beyond $S=\one$. For SLOCC classes with a trivial stabilizer the aforementioned result on SEP convertibility allows to prove that no SEP (and, therefore, no LOCC) conversion is possible among pure states. Thus, besides the known examples of 4-qubit and 3-qutrit states in Sec. \ref{secpaulis} we provide more examples of SLOCC classes for $2^m$--qubit states ($m\geq 2$) that have a non-trivial finite stabilizer.

\section{General results on convertibility under $LOCC_{\N}$}\label{secgeneral}

To start, we review the results on $LOCC_{\N}$-convertibility presented in the companion paper \cite{short} and we present fully-detailed proofs of claims outlined therein. We also discuss more extensively the implications of some of these investigations and provide explicit examples for the phenomenon that the action of one party can allow or prevent another party to implement a deterministic transformation.
\subsection{Reachable states}\label{secreachable}

One of our main results in \cite{short} is to characterize all reachable states via $LOCC_{\N}$ in the considered SLOCC classes as it is stated in the following theorem.

\begin{theorem} \label{theorem_1}
A state $\ket{\Phi}\propto h \ket{\Psi_s}$ is reachable via $LOCC_{\N}$, iff there exists $S \in S_{\Psi_s}$ and a $j\in \{1,\ldots,n\}$ such that:
\bi \item[(i)] For any $i\neq j$ $[H_i, S^{(i)}]=0$ and
\item[(ii)] $[H_j, S^{(j)}]\neq 0$.\ei
\end{theorem}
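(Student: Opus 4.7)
The plan is to prove both directions using the SEP characterization (\ref{condsep}) applied to the last nontrivial round of an $LOCC_{\N}$ protocol, together with the fact that elements of a finite stabilizer group may be taken to be LUs.

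For the ``only if'' direction, suppose $h\ket{\Psi_s}$ is reachable via an $LOCC_{\N}$ protocol from an LU-inequivalent state $g\ket{\Psi_s}$. Because the protocol has finitely many rounds, I single out the last round (in some branch) at which the pre-measurement state $g'\ket{\Psi_s}$ is still LU-inequivalent to $h\ket{\Psi_s}$; such a round must exist, else every round would be purely cosmetic and the overall transformation trivial. By the structure (\ref{loccmap}) only one party, say $j$, acts non-unitarily in this round while the others apply outcome-conditional LUs. The effective one-round transformation $g'\ket{\Psi_s}\to h\ket{\Psi_s}$ is SEP, so (\ref{condsep}) yields Kraus operators of the form $X_k = \sqrt{p_k}\,h S_k (g')^{-1}$ with $S_k\in S_{\Psi_s}$. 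Demanding that each local operator $h_i S_k^{(i)}(g_i')^{-1}$ on parties $i\neq j$ be unitary, and using the unit-trace normalization of the reduced operators together with the unitarity of stabilizer elements, forces $S_k^{(i)\dagger}H_iS_k^{(i)}=G_i'$ for every $k$ individually. Moreover, not all $S_k^{(j)\dagger}H_jS_k^{(j)}$ can coincide: if they did, every $S_k$ would conjugate $H$ to the same $G'$, placing the two states in a single stabilizer orbit and hence making them LU-equivalent, contradicting the choice of round. Picking $k=1,2$ with $S_1^{(j)\dagger}H_jS_1^{(j)}\neq S_2^{(j)\dagger}H_jS_2^{(j)}$ and setting $S:=S_2 S_1^{-1}\in S_{\Psi_s}$, the identity $S_1^{(i)}G_i'S_1^{(i)\dagger}=S_2^{(i)}G_i'S_2^{(i)\dagger}=H_i$ for $i\neq j$ rearranges into $[H_i,S^{(i)}]=0$, while the inequality at $j$ translates directly into $[H_j,S^{(j)}]\neq 0$.

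For the ``if'' direction, given $S\in S_{\Psi_s}$ satisfying (i) and (ii), I construct an explicit one-round (in fact all-det-$LOCC_{\N}$) protocol ending in $h\ket{\Psi_s}$. Set
\begin{equation*}
G := pH + (1-p)\,S^\dagger H S
\end{equation*}
for a parameter $p\in(0,1)$ to be chosen; by (i) this factorizes as $G_i = H_i$ for $i\neq j$ and $G_j = pH_j + (1-p)S^{(j)\dagger}H_jS^{(j)}$. With $g_j^\dagger g_j = G_j$, party $j$ performs the two-outcome POVM with Kraus operators $\sqrt{p}\,h_j g_j^{-1}$ and $\sqrt{1-p}\,h_j S^{(j)}g_j^{-1}$, while for $i\neq j$ condition (i) combined with $G_i=H_i$ guarantees that the associated local maps $h_i g_i^{-1}$ and $h_i S^{(i)}g_i^{-1}$ are unitary. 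A direct check shows that the global Kraus operators satisfy $\sum_k X_k^\dagger X_k = \one$ and that both branches deterministically yield $h\ket{\Psi_s}$. It remains to ensure LU-inequivalence of $g\ket{\Psi_s}$ and $h\ket{\Psi_s}$: if $G = T^\dagger HT$ for some $T\in S_{\Psi_s}$ then, because $S_{\Psi_s}$ is finite while $p\mapsto G$ traces a genuine line segment in the space of Hermitian operators (nontrivial by (ii)), at most finitely many values of $p$ are bad, and any other choice works.

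The main obstacle is the structural step in the necessity direction: justifying that the argument may be reduced to a single round in which exactly one party acts non-unitarily and in which the state genuinely changes modulo LU. Once this reduction is in hand, everything else is algebraic manipulation of (\ref{condsep}) combined with the group structure of $S_{\Psi_s}$. A secondary subtlety is the LU-inequivalence check in the sufficiency direction, which is handled by the continuity-versus-finiteness argument above.
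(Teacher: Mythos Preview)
Your proposal is correct and follows essentially the same approach as the paper. For necessity, both you and the paper isolate a final round in some branch where a single party $j$ performs the last genuinely nontrivial measurement, identify the Kraus operators as $h_jS_k^{(j)}g_j'^{-1}$ via the stabilizer, and read off conditions (i) and (ii) from the unitarity constraint on the other parties and the LU-inequivalence of the penultimate state; your explicit choice $S=S_2S_1^{-1}$ and the continuity-versus-finiteness argument for LU-inequivalence in the sufficiency direction are clean details that the paper only sketches (deferring the full proof to the companion paper \cite{short}). The only point worth tightening is the reduction to a single non-unitary party in the last round: the paper's formulation (cf.\ Eq.~(\ref{loccmap}) and the accompanying footnote) allows the acting party $s$ to depend on the outcome $i_k$, so several parties may act probabilistically in one round; this is harmless since conditioning on which party measured still yields a valid POVM for that party alone taking $g'\ket{\Psi_s}$ to $h\ket{\Psi_s}$, but you should say so explicitly.
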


The proof of this theorem is given in \cite{short}. The idea behind it is somehow similar to the intuition given above on the condition under SEP. Since the protocol contains a finite number of rounds, in every branch there must exist a last measurement by one party that maps deterministically the state in this branch to the target state so that the process terminates. If we denote this second-to-last state in one branch by $g\ket{\Psi_s}$, the measuring party, say $j$, can only implement a POVM with local measurement operators $X^{(j)}_k\propto h_jS_k^{(j)}g_j^{-1}$, which requires condition (ii). The fact that the other parties can map by LUs the different outcomes to the final state requires condition (i).

Theorem \ref{theorem_1} easily reveals whether a state is reachable or not once the unitary stabilizer $S_{\Psi_s}$ is determined. In \cite{SpdV16,HeSp15} we have provided general means to achieve this task. Moreover, as we discuss below our techniques also allow to answer in many instances which states can be transformed into the reachable states. However, interestingly, it is now very intuitive to see that very few states are going to have the property of being reachable. Notice, that this is only the case for general states $\bigotimes_ih_i|\Psi_s\rangle$ if all but one of the $\{H_i\}$ commute with the local part of one symmetry $S\in S_{\Psi_s}$, which is a very restrictive condition. Actually, in \cite{short} we give the following corollary, whose proof we detail here.

\begin{corollary}
The set of $n$-qubit states ($n>3$) which are reachable via a $LOCC_{\N}$ protocol is of measure zero. \end{corollary}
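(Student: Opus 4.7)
The plan is to apply Theorem~\ref{theorem_1} inside each SLOCC class with finite unitary stabilizer. By the Wallach result cited in the introduction, the set of $n$-qubit states ($n>3$) that do not belong to such a class is itself of measure zero, so it suffices to show that inside any such SLOCC class the reachable states form a measure-zero subset of the natural parameter space, i.e.\ the Bloch vectors $(\h_1,\ldots,\h_n)\in(\R^3)^n$ with $\|\h_i\|<1/2$ (the stabilizer being finite, this parameterization of LU-equivalence classes within the class is smooth and finite-to-one, so measure zero lifts to measure zero in state space).

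The crucial preliminary observation is that every non-trivial symmetry $S\in S_{\Psi_s}$ has at least two non-scalar local components. Suppose, for contradiction, that only $S^{(j_0)}$ were non-scalar. Then $S$ acts as a scalar times the identity on every party but $j_0$. Expanding $\ket{\Psi_s}$ in its Schmidt decomposition across the bipartition $j_0\,|\,\mathrm{rest}$ and using the standing assumption that the reduced state $\rho_{j_0}$ has full rank, the identity $S\ket{\Psi_s}=\ket{\Psi_s}$ forces $S^{(j_0)}\propto\one$, contradicting non-triviality.

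Now fix a pair $(S,j)$ with $S\in S_{\Psi_s}\setminus\{\one\}$ and $j\in\{1,\ldots,n\}$. The preliminary observation guarantees that at least one $i\neq j$ has $S^{(i)}$ non-scalar. Writing such an $S^{(i)}$ (up to phase) as $\cos\alpha\,\one+i\sin\alpha\,\hat n_i\cdot\vec\sigma$ with $\sin\alpha\neq 0$, the commutation relation $[H_i,S^{(i)}]=0$ reduces to $\h_i\times\hat n_i=0$, i.e.\ to $\h_i$ lying on the one-dimensional line spanned by $\hat n_i$ in $\R^3$. Hence condition (i) of Theorem~\ref{theorem_1} for this $(S,j)$ confines the states in question to a codimension-two real-algebraic subvariety of the parameter space, which has Lebesgue measure zero. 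Since $S_{\Psi_s}$ is finite and $j$ ranges over a finite set, the set of reachable states within the SLOCC class is a finite union of such subvarieties and is therefore of measure zero, proving the corollary.

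The main obstacle here is not the codimension count itself but the preliminary observation that non-trivial symmetries must carry at least two non-scalar local parts; this is where the genuine-multipartite entanglement hypothesis (full-rank single-site reductions) enters in an essential way. Once this is granted, the rest is routine: for each fixed $(S,j)$ the reachability condition kills at least two real degrees of freedom of the $3n$-dimensional parameter space, and finiteness of $S_{\Psi_s}$ and of the choice of $j$ makes the union harmless.
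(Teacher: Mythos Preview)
Your proof is correct and follows the same strategy as the paper's: apply Theorem~\ref{theorem_1}, translate the commutation conditions into the Bloch picture, observe that $[H_i,S^{(i)}]=0$ with $S^{(i)}$ non-scalar confines $\h_i$ to a one-dimensional line, and conclude by taking a finite union over symmetries and choices of $j$. The one substantive addition is your preliminary observation that every non-trivial local symmetry must have at least two non-scalar components, which you derive from the full-rank assumption on the single-site reduced states. The paper's proof does not make this explicit: it simply asserts that condition~(i) of Theorem~\ref{theorem_1} cuts out a measure-zero set, without ruling out the possibility that all $S^{(i)}$ with $i\neq j$ are scalar (in which case condition~(i) would be vacuous and condition~(ii) would be satisfied on a full-measure set). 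Your argument via the Schmidt decomposition is exactly what is needed to close this, so your version is in fact more complete on precisely the point you identify as the main obstacle.
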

\begin{proof}
As stated above, for $n$-qubit states of more than three parties our SLOCC classes are generic and, hence, Theorem 1 applies. In the Bloch representation picture the action of the SU(2) unitaries $S_k^{(i)}$ translate to SO(3) rotations. Thus, $[H_i, S^{(i)}]=0$ can only hold if $\h_i$ lies on the axis of the corresponding rotation (or is zero). Thus, since there are only finitely many symmetries, the sets of vectors $\{\h_i\}$ fulfilling condition (i) in Theorem 1 are of measure zero in the Bloch ball. It just remains to take into account that two sets of vectors correspond to the same LU-equivalence class iff $H'=S^\dag HS$ for some $S\in S_{\Psi_s}$. With this, one can easily see that condition (i) in Theorem 1 is fulfilled either in every LU-equivalent representation or in none. Thus, the generic sets of vectors that do not fulfill condition (i) cannot do so in another LU-equivalent representation.
\end{proof}

Note again that all results in this paper obviously also apply to SLOCC classes whose stabilizer is trivial. Among states in these classes LOCC transformations are not possible at all.
However, Corollary 2 shows that almost no $n$-qubit state is reachable via $LOCC_{\N}$ independently of whether their stabilizer is trivial or not
\footnote{After completing this manuscript it was proven that generic $n$-qubit-states ($n\geq 5$) indeed have a trivial stabilizer \cite{GoKr16}.
This allows for an alternative proof of our Corollary 2 that also extends to infinite-round protocols.}. Moreover, this result applies also to all SLOCC families
(generic or not) of multipartite states of arbitrary dimension with a non-trivial finite stabilizer, which are clearly the only a priori relevant classes with a finite stabilizer in this context. Using the same arguments as in the proof of Corollary 2, it is straightforward to see that the set of states which are reachable in a given non-trivial SLOCC class is of measure zero there. In particular, whenever our considered SLOCC classes are generic (e.g.\ 3-qutrit states) we have that almost no state of the given number of parties and local dimensions is reachable in the full set of states.

It might be illuminating to compare the result of Corollary 2 to the generic 3-qubit case (GHZ SLOCC class), where almost every state is reachable by an $LOCC_{\N}$ protocol \cite{turgutghz,dVSp13}. Actually, in this case it can be seen that it is sufficient to consider the protocols used in the proof of Theorem 1 by just considering a finite subset of unitary symmetries $\tilde S_{GHZ}$ of the full non-finite (and not even compact) \footnote{Note that in \cite{Gour} it has been shown that if the stabilizer is not finite it is not even compact.}  stabilizer $S_{GHZ}$. However, states are generically reachable in this case because one can always find a non-unitary symmetry $S\in S_{GHZ}$ such that almost every LU-equivalence class can be put in a form such that $S^\dag HS$ fulfills the commutation relations of the conditions of Theorem 1 with some unitary symmetry in $\tilde S_{GHZ}$.

It is also interesting to compare the power of $LOCC_{\N}$ as given by Theorem 1 with that of SEP. Without using the fact that $LOCC_{\N}\subset$ SEP, it should be clear that the conditions of Theorem 1 are sufficient for SEP convertibility since in this case the states $g \ket{\Psi_s}$ and $h \ket{\Psi_s}$ with
\begin{equation}
G=H_1\otimes\cdots\otimes\left(pH_j+(1-p)(S^{(j)})^\dag H_jS^{(j)}\right)\otimes\cdots\otimes H_n
\end{equation}
for some $p\in(0,1)$ clearly satisfy the condition of Eq.\ (\ref{condsep}). However, one might question whether these conditions are also sufficient for SEP. It turns out that for a large number of SLOCC classes this is indeed the case \cite{dVSp13,SpdV16}, and we then have in these cases that the set of reachable states is the same under $LOCC_{\N}$, LOCC and SEP. However, this is not always the case for every SLOCC class and this is what has allowed to prove in \cite{HeSp15} that there exist SEP transformations among pure states which are not achievable by LOCC (even if infinitely many rounds of communication are considered) in the case of 3-qutrit states. Here, we use Theorem 1 to provide a different example of a state conversion that is possible by SEP but not by $LOCC_{\N}$ (it remains unclear whether it is possible by LOCC or not). For this example we use the 4-qubit L-state SLOCC class. This family will be considered in different sections throughout this paper and, therefore, we summarize its properties in the following subsection besides providing the aforementioned result.

\subsubsection{The L-state SLOCC class and an example of SEP transformations which are not implementable with $LOCC_{\N}$}\label{seclstate}

The L state is a 4-qubit state given by
\begin{equation}\label{lstate}
|L\rangle=\frac{1}{\sqrt{3}}(|\phi^-\rangle|\phi^-\rangle+e^{i\pi/3}|\phi^+\rangle|\phi^+\rangle
+e^{i2\pi/3}|\psi^+\rangle|\psi^+\rangle),
\end{equation}
where we use the standard 2-qubit Bell basis
\begin{equation}\label{bell}
\ket{\phi^\pm}=\frac{1}{\sqrt{2}}(\ket{00}\pm \ket{11}),\; \ket{\psi^\pm}=\frac{1}{\sqrt{2}}(\ket{01}\pm \ket{10}).
\end{equation}
This state was considered in \cite{GoWa}, where it was noticed that it has interesting properties: it maximizes among 4-qubit states the average entanglement across bipartitions for several general bipartite entanglement measures. Remarkably, the L state and its corresponding SLOCC class also show a very interesting behaviour under multipartite entanglement manipulation. Its local stabilizer is simple but at the same time rich enough so that this class seems to be the perfect candidate to seek for out-of-the-ordinary features that break the general rule \cite{SpdV16}. As mentioned above, we use this class here to provide examples of SEP transformations that cannot be implemented by $LOCC_{\N}$ but we consider it again in Secs.\ \ref{seclocking} and \ref{secexample} to show other peculiarities of multipartite LOCC manipulation.

The stabilizer of the L state is given by 12 elements \cite{SpdV16},
\begin{equation}\label{Lstabilizer}
S_L=\{\{\one, U, U^2 \}\times\{\sigma_i\}_{i=0}^3\}^{\otimes4},
\end{equation}
where $U=\exp(i\frac{\pi}{4}\sigma_2)\exp(i\frac{\pi}{4}\sigma_1)$. It might be more illuminating to consider the SU(2) conjugation $UG_iU^\dag$ as an SO(3) rotation $O_U\g_i$ in the Bloch picture, where it turns out that
\begin{equation}
O_U=\left(
      \begin{array}{ccc}
        0 & 1 & 0 \\
        0 & 0 & 1 \\
        1 & 0 & 0 \\
      \end{array}
    \right).
\end{equation}
That is, conjugation by $U$ amounts to a cyclic permutation of the entries of the Bloch vector, or, in other words, it corresponds to a rotation of $2\pi/3$ around the axis given by $(1,1,1)$. Similar considerations apply to $U^2=-U^\dag$.

Here and in Sec. \ref{secmes} we use the twirling operation. Given a finite unitary group $S=\{S_k\}$ it is defined by
\begin{equation}\label{twirling}
T_{S}(H)=\frac{1}{|S|}\sum_kS_kHS_k^\dag.
\end{equation}
It is straightforward to verify that $T_S(H)=H$ iff $[H,S_k]=0$ $\forall k$ and its kernel is given by the orthogonal complement of the commutant of $S$. Inspired by the construction of \cite{HeSp15}, it is a simple exercise to see that
\begin{equation}
T_{S_L}\left(H_1\otimes H_2\otimes \one \otimes \one \right)\propto \one^{\otimes4}
\end{equation}
whenever $\h_1\cdot\h_2=0$. By virtue of Eq.\ (\ref{condsep}), this shows that $|L\rangle$ can be transformed by SEP into any state $h_1\otimes h_2\otimes\one\otimes\one|L\rangle$ whose Bloch vectors fulfill the above condition. However, if we choose $\h_1$ and $\h_2$ so that they do not point in the X, Y or Z direction nor have all their entries equal up to sign, which is certainly possible, we have then that
\begin{equation}
[H_j,\sigma_i],[H_j,U],[H_j,U\sigma_i]\neq0,\quad j=1,2,i=1,2,3.
\end{equation}
That is, $[H_j,S_k^{(j)}]\neq0$ for $j=1,2$ and $\forall S\in S_L$ (except $S=\one^{\otimes4}$). Therefore, the conditions of Theorem 1 are not met and all such states are not reachable by $LOCC_{\N}$ from any other state in the L-state class despite being so by SEP from $|L\rangle$ \footnote{Note that this is one of the reasons why we mentioned in \cite{SpdV16} that the L--class resembles the generic 3-qutrit case, where we proved that in fact the corresponding 3-qutrit states are not even reachable via LOCC (including infinitely many rounds of communication) despite being reachable via SEP.}.

\subsection{Convertible states}\label{secconvertible}

In the companion paper \cite{short} we have also discussed which states could then be converted to the characterized set of reachable states. Interestingly, in the proof of sufficiency for Theorem 1 we have shown in \cite{short} that every reachable state can be obtained from some other state by a very simple protocol: $LOCC_j$. This is the set of LOCC protocols which consist of only one round of classical communication and in which only party $j$ acts non-unitarily (i.e.\ in Eq.\ (\ref{loccmap}) $m=1$ and $s=j$). $LOCC_j$ maps are then the building blocks of all-det-$LOCC_{\N}$ protocols, that we have referred to in the introduction. The latter can be defined as those $LOCC_{\N}$ protocols that can be obtained as a concatenation of a finite number of $LOCC_j$ maps in which in every step the non-unitary party $j$ can be different. We stress here again that all LOCC transformations among pure states determined so far belong to this class of protocols. It is thus a very natural question to ask which states are convertible via $LOCC_j$ for some party $j$ as, together with Theorem 1, this then characterizes all states which can occur in all-det-$LOCC_{\N}$ transformations. This is the subject of the following lemma that we have outlined in \cite{short} and whose full proof we present for the sake of readability in Appendix \ref{lemma3}. Without loss of generality we consider $LOCC_1$.

\begin{lemma} \label{lemmaconv}
A state $\ket{\Psi}\propto g\ket{\Psi_s}$ is convertible via $LOCC_1$ iff there exist $m$ symmetries $S_k\in S_{\Psi_s}$, with $m> 1$ and $H\in {\cal B} ({\cal H}_1)$, $H >0$ and $p_k> 0$ with $\sum_{k=1}^m p_k=1$ such that the following conditions hold
\bi \item[(i)] $[G_i,S_k^{(i)}]=0$ $\forall i>1$ and $\forall k \in \{1,\ldots,m\}$
\item[(ii)] $G_1=\sum_{k=1}^m p_k (S_k^{(1)})^\dagger H S_k^{(1)}$ and $H \neq S^{(1)} G_1 {S^{(1)}}^\dagger$ for any $S \in S_{\Psi_s}$ fulfilling (i).
\ei
\end{lemma}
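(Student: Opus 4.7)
The plan is to prove both directions of the equivalence by leveraging the SEP characterization recalled around Eq.~(\ref{condsep}). Any $LOCC_1$ protocol is in particular a SEP protocol, so every Kraus decomposition of a conversion from $\ket{\Psi}\propto g\ket{\Psi_s}$ to an LU-inequivalent $\ket{\Phi}\propto h\ket{\Psi_s}$ must be of the form $X_k=\sqrt{p_k}\,hS_kg^{-1}$ with $S_k\in S_{\Psi_s}$ and $\sum_k p_k S_k^\dagger H S_k=G$. The key $LOCC_1$ constraint is that on each party $i>1$ the local factor $h_iS_k^{(i)}g_i^{-1}$ must be proportional to a unitary.

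For the necessity direction, I would first observe that proportionality to a unitary on party $i>1$ yields $(S_k^{(i)})^\dagger H_iS_k^{(i)}\propto G_i$ for every $k$ and every $i>1$. Next I would exploit the gauge freedom of absorbing a single symmetry into the target: since $S_1\ket{\Psi_s}=\ket{\Psi_s}$, replacing $h$ by $hS_1$ and each $S_k$ by $S_1^{-1}S_k$ leaves the Kraus operators and the target state invariant, so we may assume $S_1=\one$. The $k=1$ proportionality then forces $H_i\propto G_i$, and a choice of normalization of $h$ makes this an equality, $H_i=G_i$ for $i>1$. Applied to general $k$, the proportionality becomes $[G_i,S_k^{(i)}]=0$ for all $i>1$, i.e.\ condition~(i). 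The POVM normalization on party~$1$, $\sum_k(X_k^{(1)})^\dagger X_k^{(1)}=\one$, translates into $\sum_kp_k(S_k^{(1)})^\dagger HS_k^{(1)}=G_1$ upon setting $H\equiv H_1$, which is the first half of~(ii). The requirement that source and target be LU-inequivalent translates precisely to the non-existence of a symmetry satisfying~(i) and realizing $H=S^{(1)}G_1(S^{(1)})^\dagger$, giving the second half of~(ii); this in turn forces $m>1$, since $m=1$ would yield $H=S_1^{(1)}G_1(S_1^{(1)})^\dagger$ with $S_1$ trivially obeying~(i).

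For the sufficiency direction, I would explicitly construct a valid protocol. Given data $H$, $\{S_k\}$, $\{p_k\}$ obeying (i)--(ii), let the target have local operators $H_1=H$ and $H_i=G_i$ for $i>1$, and choose any representative $h=\otimes_i h_i$ with $h_i^\dagger h_i=H_i$. Define the Kraus operators $X_k=\sqrt{p_k}\,hS_kg^{-1}$. By condition~(i), the factor on every party $i>1$ obeys $(h_iS_k^{(i)}g_i^{-1})^\dagger(h_iS_k^{(i)}g_i^{-1})=g_i^{-\dagger}(S_k^{(i)})^\dagger G_iS_k^{(i)}g_i^{-1}=g_i^{-\dagger}G_ig_i^{-1}=\one$, hence it is a unitary $U_k^{(i)}$. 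By the first half of~(ii), $\sum_k(X_k^{(1)})^\dagger X_k^{(1)}=g_1^{-\dagger}G_1g_1^{-1}=\one$, so party~$1$ implements a valid POVM. The resulting protocol is manifestly $LOCC_1$ and maps $g\ket{\Psi_s}$ to $h\ket{\Psi_s}$, which the second half of~(ii) guarantees to be LU-inequivalent to the source.

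The main obstacle I expect is the gauge-fixing bookkeeping in the necessity argument: one must verify that absorbing $S_1$ into $h$ is genuinely harmless and that the various multiplicative constants arising from the proportionality relations on the parties $i>1$ can be simultaneously absorbed into the party-$1$ factor so that the $p_k$ appearing in~(ii) coincides with the actual branch probability $\|X_k\ket{\Psi}\|^2$. Once this bookkeeping is in place, both implications follow almost mechanically from the SEP convertibility condition of Eq.~(\ref{condsep}).
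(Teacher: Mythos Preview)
Your proposal is correct and follows essentially the same logic as the paper's proof, with a slightly different entry point. The paper starts from a generic POVM $\{A_k\}$ implemented by party~1, compares the different branches via LU-equivalence, and thereby extracts the symmetries $S_k$ and the relations (a)--(b) that force the commutation condition; you instead invoke the SEP Kraus form $X_k=\sqrt{p_k}\,hS_kg^{-1}$ from the outset and then impose the $LOCC_1$ constraint that the factors on parties $i>1$ be unitary. Both routes land on the same two key facts: unitarity on party $i>1$ forces $[G_i,S_k^{(i)}]=0$, and the POVM normalization on party~1 yields the first half of~(ii). The sufficiency construction is identical in both.

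The bookkeeping you flag as the main obstacle is handled cleanly by the unit-trace normalization of the $G_i$ and $H_i$. Since each $S_k^{(i)}$ is unitary, $\tr\big((S_k^{(i)})^\dagger H_iS_k^{(i)}\big)=\tr H_i=1=\tr G_i$, so the proportionality $(S_k^{(i)})^\dagger H_iS_k^{(i)}\propto G_i$ is automatically an equality; hence the scalar factors on parties $i>1$ have unit modulus and the full weight $\sqrt{p_k}$ sits on party~1, making the $p_k$ in~(ii) coincide (up to relabelling) with the POVM weights. The paper reaches the same conclusion by a determinant argument on the relation $g_i=r_k^{(i)}U_k^{(i)}g_iS_k^{(i)}$ rather than via traces.
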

Note that the second condition in (ii) ensures that trivial transformations are excluded.
A state $g\ket{\Psi_s}$ fulfilling the premises of Lemma 3 can be transformed by $LOCC_1$ into $h_1\otimes g_2 \otimes \ldots \otimes g_n\ket{\Psi}$ by party 1 performing the POVM with measurement operators $\{A_k= \sqrt{p_k} h_1 S_k^{(1)} g_1^{-1}\}_{k=1}^m$ (which satisfies $\sum_kA_k^\dag A_k=\one$ due to condition (ii)). After this measurement, the parties hold the states $h_1S_k^{(1)}\otimes g_2\otimes\cdots\otimes g_n|\Psi_s\rangle$ depending on the outcome $k$. Upon reception of the measurement outcome through the classical channel, each party $i\neq 1$ applies the unitary $U_k^{(i)}$ defined by $U_k^{(i)} g_i=  g_i S_k^{(i)}  $, which has to exist due to condition (i). Since $S_k|\Psi_s\rangle=|\Psi_s\rangle$, we obtain then the desired state in any branch of the protocol. Iterating $LOCC_j$ steps we obtain all possible all-det-$LOCC_{\N}$ protocols. However, this process is very restricted. If party 1 has used the above POVM, then all parties different from 1 have to fulfill condition (i) in Lemma 3 for the corresponding symmetries. If an $LOCC_2$ protocol was to be implemented subsequently, new symmetries must be found that commute as well with the operators $G_i$ for $i>2$ (and also with $H_1$). This is a very strong constraint. Notice, for example, that in the case of qubits the only positive operator that commutes with two different non-commuting unitaries is the identity. Being the set of reachable states very restricted as well, it emerges as a reasonable conjecture that naturally extends the bipartite case whether all-det-$LOCC_{\N}$ protocols are sufficient for all possible LOCC transformations in the SLOCC classes we consider. Theorem 1 and Lemma 3 allow to characterize all states taking part in these transformations and this would make the investigation of $LOCC_{\N}$ protocols fully feasible. In fact, in Sec. \ref{secmes} we use Lemma 3 to determine those states which are not reachable by $LOCC_{\N}$ but convertible by all-det-$LOCC_{\N}$ protocols. We discuss therein the connection of this clearly relevant class of states with the MES of states (see below for the definition) and show moreover that entanglement measures can be estimated. In Sec. \ref{secpaulis} we construct an SLOCC class for an arbitrary number of parties with non-trivial finite stabilizer where we show that all LOCC transformations are indeed achievable by all-deterministic transformations. However, as we have shown with a particular example in \cite{short}, it turns out that this is in general not true. In Sec. \ref{secexample} we discuss in more detail constructions of transformations that can be implemented by $LOCC_{\N}$ but which require an intermediate probabilistic step. Before all that, we conclude this section by discussing an interesting effect that we outlined in \cite{short}.

\subsection{Unlocking and locking the power of other parties}\label{seclocking}

As we have discussed above, in an all-det-$LOCC_{\N}$ protocol the implementation of an $LOCC_j$ step puts strong constraints on the subsequent $LOCC_i$ steps that might be carried out by some party $i$. It can actually happen that a state that is convertible by $LOCC_j$ by any party is no longer convertible by any other party once one party acts.
We call this effect locking the power of other parties. It occurs for instance in any SLOCC class with a stabilizer such that $S^{(j)}\neq\one$ $\forall j$ whenever $S\neq\one$.
A simple example is given by considering a state $\ket{\Psi_s}$ with $S_{\Psi_s}=\{\sigma_i^{\otimes n}\}_{i=0}^3$. This state is clearly convertible by $LOCC_j$ for any party $j$ (see Lemma 3). However,
if party 1, acts first so that we obtain the state $h_1\otimes\one\ket{\Psi_s}$ such that $[H_1,S^{(1)}]\neq0$ for every non-trivial $S\in S_{\Psi_s}$, then the state is no longer convertible
by $LOCC_j$ for $j\neq1$ [see Fig. \ref{unlock} (a)]. \\ Perhaps more interestingly, the opposite effect, namely that one party can enable another party to implement a deterministic transformation, can take place as well. We termed this effect unlocking the power of the others. In this case, we consider two different parties $j$ and $k$. Then, we could have a starting state which is not $LOCC_k$-convertible, but for which party $j$ might execute an $LOCC_j$ transformation in such a way that the output state is now $LOCC_k$-convertible. We have observed that this effect exists in the L-state family (cf.\ Sec.\ \ref{seclstate}). In particular, an example is given by considering a state $g_1\otimes \one^{\otimes 3}\ket{L}$, where $\g_1=((2p-1) x,(2p-1) x,x)$. The parameters $x $ and $p$ are chosen such that the following two conditions are fulfilled: (i) $[G_1,S^{(1)}]\neq 0$, $\forall S \in S_L$ ($S\neq \one^{\otimes 4}$) and (ii) $H_1=\one/2+x\sum_{j=1}^3 \sigma_j$ is a positive operator. Due to Lemma 3 condition (i) implies that the state $g_1\otimes \one^{\otimes 3}\ket{L}$ cannot be converted by any $LOCC_k$ with $k>1$. Condition (ii) ensures that $G_1$ is positive and that the operators $\sqrt{p}h_1 g_1^{-1}, \sqrt{1-p}h_1 \sigma_3 g_1^{-1}$ where $h_1= \sqrt{H_1}$ form a valid measurement.
If party 1 implements this measurement and all other parties apply for the first (second) measurement outcome the identity ($\sigma_3$), respectively, the state $g_1\otimes \one^{\otimes 3}\ket{L}$ is transformed
deterministically into the state $h_1\otimes \one^{\otimes 3}\ket{L}$. Note that $H_1$ (and $h_1$) commute now with $U$. Hence, once this $LOCC_1$ protocol is applied, any of the other parties can apply a LOCC using
the symmetry $U$ [see Fig. \ref{unlock} (b)].
A similar effect is presented in \cite{GourSpek06} in order to show that entanglement of assistance is not an entanglement measure.\\
\begin{figure}
\centering
\includegraphics[width=0.4\textwidth]{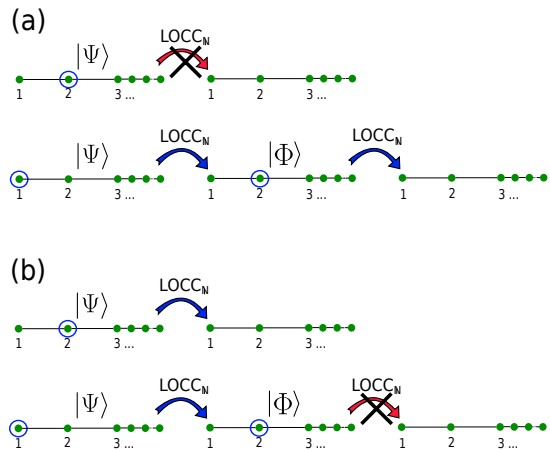}
\caption{(a) Locking the power of others: After party 1 measured, party 2 cannot apply any nontrivial deterministic transformation even though party 2 could have performed one before.
(b) Unlocking the power of others: Only after party 1 has measured can party 2 apply a nontrivial deterministic transformation.}
\label{unlock}
\end{figure}
In contrast to locking the power of others the effect that an $LOCC_j$ protocol can transform a state which is not $LOCC_k$-convertible into a $LOCC_k$-convertible one can only occur if the stabilizer of the considered SLOCC class has a particular structure. The following lemma identifies classes where it is certainly not possible to unlock any power. For example, generic 4-qubit states and generic 3-qutrit states fall into this category.

\begin{lemma}
\label{lemma_specialsym}
If $\ket{\Psi_s}$ is such that $S_i S_j \propto S_j S_i$ $\forall S_i,S_j\in S_\Psi$, then unlocking the power of others is impossible. \end{lemma}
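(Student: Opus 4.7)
I will argue the contrapositive: supposing that $\tilde g \ket{\Psi_s}$, obtained from $g \ket{\Psi_s}$ by an $LOCC_i$ step, is $LOCC_j$-convertible for some $j\neq i$, I will show that $g \ket{\Psi_s}$ was already $LOCC_j$-convertible. Applying Lemma \ref{lemmaconv} to the two steps, I obtain symmetries $T_k\in S_{\Psi_s}$, weights $p_k$, and $H_i=\tilde G_i$ with $G_i=\sum_k p_k T_k^{(i)\dagger} H_i T_k^{(i)}$ and $[G_m,T_k^{(m)}]=0$ for all $m\neq i$, as well as symmetries $R_l\in S_{\Psi_s}$, weights $q_l$, and $H'>0$ with
\begin{equation*}
G_j=\tilde G_j=\sum_l q_l\, R_l^{(j)\dagger} H' R_l^{(j)},
\end{equation*}
together with $[H_i,R_l^{(i)}]=0$, $[G_m,R_l^{(m)}]=0$ for $m\neq i,j$, and the non-triviality clause of Lemma \ref{lemmaconv} for $\tilde g$.

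The central observation is that the hypothesis $S_a S_b\propto S_b S_a$ allows one to lift $[H_i,R_l^{(i)}]=0$ to $[G_i,R_l^{(i)}]=0$. Writing $R_l T_k=\eta_{lk} T_k R_l$ with a global phase $\eta_{lk}$, the tensor factorization of a scalar operator forces local relations $R_l^{(m)} T_k^{(m)}=\eta_{lk}^{(m)} T_k^{(m)} R_l^{(m)}$, where each $\eta_{lk}^{(m)}$ is itself a phase with $\prod_m \eta_{lk}^{(m)}=\eta_{lk}$. A short computation commuting the $R_l^{(i)}$'s past the $T_k^{(i)}$'s then yields
\begin{equation*}
[R_l^{(i)},G_i]=\sum_k p_k\,\bar\eta_{lk}^{(i)}\,T_k^{(i)\dagger}\,[R_l^{(i)},H_i]\,T_k^{(i)},
\end{equation*}
so $[H_i,R_l^{(i)}]=0$ indeed forces $[G_i,R_l^{(i)}]=0$. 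Combined with the commutators at the other sites (unchanged, since $\tilde G_m=G_m$ for $m\neq i$), this shows that the set $\{R_l\}$ satisfies condition (i) of Lemma \ref{lemmaconv} also for $g$, while the decomposition of $G_j$ in part (ii) transfers immediately.

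The remaining and most delicate part is the non-triviality clause of condition (ii) for $g$: one must establish $H'\neq S^{(j)\dagger} G_j S^{(j)}$ for every $S\in S_{\Psi_s}$ with $[G_m,S^{(m)}]=0$ for all $m\neq j$, a requirement strictly stronger than its $\tilde g$-analogue, which additionally demands $[H_i,S^{(i)}]=0$. I would argue by contradiction: starting from a putatively violating $S$, I would construct a modified symmetry $\tilde S=WS\in S_{\Psi_s}$, with $W$ an appropriate product of the $T_k$'s, such that $\tilde S^{(j)\dagger} G_j \tilde S^{(j)}=H'$ is preserved while $[H_i,\tilde S^{(i)}]=0$ is achieved. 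The commutation-up-to-phase property is precisely what makes this construction feasible: right-multiplication by $W$ shifts commutators on each site only by central phase factors, so the witnessing identity on party $j$ survives up to an absorbable phase, and $\tilde S$ then contradicts the non-triviality of $\tilde g\to g''$. Making this modification explicit by exploiting the averaging structure of $G_i=\sum_k p_k T_k^{(i)\dagger} H_i T_k^{(i)}$ and carefully tracking the local phases $\eta_{lk}^{(m)}$ is the main obstacle of the proof.
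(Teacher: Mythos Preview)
Your central computation---using the projective commutation $R_l T_k\propto T_k R_l$ to extract local phase relations $R_l^{(m)}T_k^{(m)}=\eta_{lk}^{(m)}T_k^{(m)}R_l^{(m)}$ and then verifying that $[H_i,R_l^{(i)}]=0$ forces $[G_i,R_l^{(i)}]=0$---is exactly the paper's argument. The paper's proof is a single short paragraph: it records this implication (phrased as ``$[H,S]=0$ implies $[\sum_i p_i S_i^\dagger H S_i,S]=0$'') and concludes immediately that any symmetry usable by the other parties after the $LOCC_i$ step was already usable before. No analysis of the non-triviality clause in Lemma~\ref{lemmaconv}(ii) appears.

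Your attempt to also secure that clause by transporting the \emph{same} target $H'$ back to $g$ and repairing the witness via a modified symmetry $\tilde S=WS$ built from the $T_k$'s is therefore extra relative to the paper, and---as you yourself note---you do not complete it. The cleaner way around the issue, implicit in the paper's reading of ``unlocking'' as ``making new symmetries available,'' is not to insist on the same $H'$ at all. $LOCC_j$-convertibility of $\tilde g$ forces at least one $R_l$ with $R_l^{(j)}\not\propto\one$ (otherwise the decomposition in (ii) would collapse to $H'=G_j$, violating non-triviality already at $S=\one$). Your central observation places this $R_l$ in the symmetry set available to party $j$ at $g$ as well, and from it one manufactures \emph{some} non-trivial output for $g$ directly---for instance by solving $G_j=pH''+(1-p)R_l^{(j)\dagger}H''R_l^{(j)}$ for $p$ near $1$ and invoking finiteness of the stabilizer to avoid the finitely many trivial values $S^{(j)}G_jS^{(j)\dagger}$. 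So the obstacle you flag is easily bypassed; the elaborate contrapositive through $\tilde S=WS$ is unnecessary.
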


\begin{proof}
For this to be true, it suffices to prove that there does not exist a positive operator $H$ such that $[H,S]=0$ and $[\sum_i p_i S_i^\dagger H S_i,S]\neq 0$, for any probability distribution $p_i$. Using that $S_i S =e^{i\phi} S S_i$ \footnote{Note that the proportionality factor must be a phase as all operators here are unitary.} it is straightforward to see that $[H,S]=0$ implies that $[\sum_i p_i S_i^\dagger H S_i,S]= 0$ for any $p_i$. Hence, any symmetry which can be used by the other parties after this transformation can already be used before this transformation. \end{proof}

\section{Maximally entangled set under $LOCC_{\N}$ and estimation of entanglement measures}\label{secmes}

An interesting consequence of the characterization of LOCC transformations among bipartite pure states \cite{nielsen} is that it firmly establishes the state in $\C^d\otimes\C^d$
\begin{equation}\label{maxent}
|\phi_d^+\rangle=\frac{1}{\sqrt{d}}\sum_{i=1}^d|ii\rangle
\end{equation}
as the maximally entangled state. This is because no state in $\C^d\otimes\C^d$ can be transformed to it while it can be transformed to any other state in $\C^d\otimes\C^d$. Thus, from the point of view of entanglement
theory, it is not surprising that the state $|\phi_d^+\rangle$ is the most useful in all bipartite applications such as teleportation. It is then natural to look for states with similar properties in the multipartite regime.
However, in general there exists no state with the above stated property when one considers $\C^{d_1}\otimes \ldots \otimes \C^{d_n}$ with $n>2$. Therefore, in \cite{dVSp13}
some of us have introduced the notion of the maximally entangled set (MES) in $\C^{d_1}\otimes \ldots \otimes \C^{d_n}$. This is the minimal set of states that allows to reach by LOCC any other truly $n$-partite entangled pure state in
that Hilbert space (or in a restricted SLOCC class). Thus, all states in the MES are not reachable by LOCC (from states of the same dimensionality and number of parties) and any reachable state can be
obtained by at least one state in the MES. Stated differently (but perhaps less intuitively) the MES is the set of truly $n$-partite entangled pure states that are not reachable by any other state.
Investigations on 4-qubit and 3-qutrit states showed that in these cases the MES is, in contrast to the 3-qubit case, of full-measure \cite{dVSp13,SpdV16,HeSp15}. This is because almost all states are isolated,
i.e.\ the states are indeed not reachable but they are also not convertible by LOCC. Isolated states are then useless in this context and the most interesting states correspond to convertible states in the MES.
Note that, our investigations on $LOCC_{\N}$ do not allow us to identify the MES because non-reachable states in this scenario might be reachable by infinite-round LOCC. Nevertheless, the full MES must be a subset
(not necessarily strict) of the MES under $LOCC_{\N}$. Notice that this latter set is precisely characterized as those states that do not fulfill the conditions of Theorem 1. As explained above, we would like to
identify convertible rather than isolated states. It would therefore be interesting to see which of the aforementioned states have this property. Using Lemma 3 one can easily characterize the states in the MES under
$LOCC_{\N}$ which are convertible under all-det-$LOCC_{\N}$ protocols.

\begin{theorem}\label{mes}
A state $g\ket{\Psi_s}$ is in the MES under the restriction to $LOCC_{\N}$ and is convertible via all-det-$LOCC_{\N}$ protocols iff
\bi \item[(i)] $\exists S\in S_{\Psi_s}$ such that $[G,S]=0$ ($S\neq\one$) and
\item[(ii)] $\forall S\in S_{\Psi_s}$ such that $[G_i,S^{(i)}]=0$ $\forall i\neq j$ it holds that $[G_j,S^{(j)}]=0$.\ei
\end{theorem}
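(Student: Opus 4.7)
My plan is to exploit the natural decoupling of the two conjuncts in the theorem: condition (ii) will encode membership in the MES under $LOCC_{\N}$ (non-reachability) via Theorem~\ref{theorem_1}, while condition (i), combined with (ii), will encode all-det-$LOCC_{\N}$-convertibility via Lemma~\ref{lemmaconv}. By Theorem~\ref{theorem_1}, $g\ket{\Psi_s}$ is reachable via $LOCC_{\N}$ precisely when there exist $S\in S_{\Psi_s}$ and $j$ with $[G_i, S^{(i)}]=0$ for all $i\neq j$ and $[G_j,S^{(j)}]\neq 0$. Its logical negation, read with $j$ universally quantified, is verbatim condition (ii). Thus non-reachability $\Leftrightarrow$ (ii), independently of (i).

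For the ``only if'' direction I still have to deduce (i) from all-det-$LOCC_{\N}$-convertibility (granted (ii)). Any such protocol must start with an $LOCC_j$ step for some party $j$, so Lemma~\ref{lemmaconv} furnishes $m>1$ symmetries $S_k\in S_{\Psi_s}$ with $[G_i, S_k^{(i)}]=0$ for every $i\neq j$. Since $m>1$, the $S_k$ cannot all equal $\one$, so at least one is nontrivial; condition (ii) then promotes the commutation to $i=j$ as well, yielding a nontrivial $S\in S_{\Psi_s}$ with $[G,S]=0$, which is (i).

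For the ``if'' direction I would take a nontrivial $S\in S_{\Psi_s}$ provided by (i) with $[G_i, S^{(i)}]=0$ for all $i$. Because $S\neq\one^{\otimes n}$ and $S\ket{\Psi_s}=\ket{\Psi_s}$, not every factor $S^{(i)}$ can be a scalar multiple of the identity (otherwise $S$ would itself be a scalar multiple of $\one^{\otimes n}$, and the stabilizer condition would then force that scalar to equal $1$). So, WLOG, $S^{(1)}\not\propto\one$. I would then invoke Lemma~\ref{lemmaconv} with the two-symmetry family $\{\one, S\}$ and weights $p_1=p_2=1/2$. The required $H>0$ must satisfy
\begin{equation}
G_1=\tfrac{1}{2}\bigl(H+(S^{(1)})^\dagger H S^{(1)}\bigr)=T(H),
\end{equation}
where $T$ is the twirl over $\{\one, S^{(1)}\}$. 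The solution set is the affine space $G_1+\ker T$, and since $S^{(1)}\not\propto\one$ the twirl $T$ has a nontrivial kernel, so this space has positive dimension; positivity being an open condition satisfied at $H=G_1$, the admissible $H$ form an open, positive-dimensional subset.

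The main obstacle will be the nondegeneracy clause of Lemma~\ref{lemmaconv}(ii): the chosen $H$ must additionally avoid $S'^{(1)} G_1 (S'^{(1)})^\dagger$ for every $S'\in S_{\Psi_s}$ that respects Lemma~\ref{lemmaconv}(i), so as to rule out an LU-trivial transformation. The rescue is finiteness of the stabilizer: that forbidden orbit has cardinality at most $|S_{\Psi_s}|<\infty$, whereas the admissible $H$ form a set of positive dimension inside $G_1+\ker T$, so a generic choice evades the orbit. With such an $H$ in hand, Lemma~\ref{lemmaconv} certifies $LOCC_1$-convertibility of $g\ket{\Psi_s}$, and hence all-det-$LOCC_{\N}$-convertibility, closing the proof.
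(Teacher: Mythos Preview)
Your overall strategy matches the paper's: condition (ii) is exactly non-reachability (the negation of Theorem~\ref{theorem_1}), and condition (i) together with (ii) should encode $LOCC_j$-convertibility via Lemma~\ref{lemmaconv}. The necessity argument is fine and essentially identical to the paper's.

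There is, however, a genuine gap in your sufficiency argument. You invoke Lemma~\ref{lemmaconv} with the two-symmetry family $\{\one,S\}$ and equal weights, so the relevant map is $T(H)=\tfrac{1}{2}\bigl(H+(S^{(1)})^\dagger H S^{(1)}\bigr)$, and you assert that $\ker T$ is nontrivial because $S^{(1)}\not\propto\one$. This is false in general: $T(K)=0$ means precisely that $K$ anticommutes with $S^{(1)}$, and the space of operators anticommuting with a unitary $S^{(1)}$ is nonzero only when $S^{(1)}$ possesses a pair of eigenvalues $\lambda,-\lambda$. A unitary of order three on a qubit, for instance, has no such pair; your $T$ is then injective, the only solution of $T(H)=G_1$ is $H=G_1$, and the transformation is trivial. (Unequal weights do not help: the off-diagonal constraint $(p_1+p_2\bar\lambda_i\lambda_j)H_{ij}=0$ forces $H_{ij}=0$ unless $p_1=p_2$ and $\lambda_j=-\lambda_i$.) The paper avoids this by twirling over the full cyclic group generated by $S^{(1)}$; all powers $S^k$ lie in $S_{\Psi_s}$ and still satisfy Lemma~\ref{lemmaconv}(i) since $[G_i,S^{(i)}]=0$ propagates to powers. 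That twirl is a genuine projection onto the commutant of $S^{(1)}$, and its kernel---the orthogonal complement of the commutant---is nontrivial whenever $S^{(1)}\not\propto\one$. The paper then takes $H=G_1+X-T(X)$ with $[X,S^{(1)}]\neq0$ and $\|X-T(X)\|$ small.

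A smaller remark: your finite-orbit argument for the nondegeneracy clause is correct but misses a simplification the paper uses implicitly. Under condition (ii) of the theorem, any $S'$ satisfying Lemma~\ref{lemmaconv}(i) automatically has $[G_1,S'^{(1)}]=0$, so $S'^{(1)}G_1(S'^{(1)})^\dagger=G_1$ and the entire forbidden orbit collapses to the single point $\{G_1\}$. It therefore suffices to arrange $H\neq G_1$, which the paper ensures directly via $[H,S^{(1)}]\neq0$ while $[G_1,S^{(1)}]=0$.
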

\begin{proof}
Let us first show that these conditions are necessary. Since the states must be convertible by all-det-$LOCC_{\N}$ (and hence by $LOCC_j$ for some party $j$) but not reachable, they must fulfill the conditions of Lemma 3 and violate at least one of the conditions of Theorem 1. However, satisfying non-trivially condition (i) in Lemma 3 implies that condition (i) in Theorem 1 also holds for a symmetry which is not the identity. Therefore, the states must violate condition (ii) in Theorem 1. The conditions given in this theorem express this fact. It thus remains to see that the conditions are sufficient, i.e.\ whenever they hold, condition (ii) in Lemma 3 is satisfied non-trivially. For that, consider the symmetry $S$ of condition (i) here, which must be different from the identity at least for one party. Take without loss of generality that $S^{(1)}\neq\one$ and consider the twirling operation $T$ under the group generated by $\one$ and $S^{(1)}$ (cf.\ Eq.\ (\ref{twirling})). We then have that $T(G_1)=G_1$. Define now $H=G_1+X-T(X)$ for a Hermitian operator $X$ such that $[X,S^{(1)}]\neq0$ (so that $[H,S^{(1)}]\neq0$). With this $T(H)=G_1$, which allows to fulfill condition (ii) in Lemma 3 non-trivially. It only remains to see that some $X$ can be chosen so that $H$ is a positive operator. However, this is clear from the fact that the set of positive operators is open. We can certainly choose $X$ such that $0<||X-T(X)||<\epsilon$. This implies that $||G_1-H||<\epsilon$ and by choosing $\epsilon$ small enough the positiveness of $G_1$ guarantees that of $H$.
\end{proof}
It is interesting to observe that the sufficiency part of the above proof works no matter which party $j$ is to perform the measurement within the $LOCC_j$ step as long as the symmetry of condition (i) obeys $S^{(j)}\neq\one$. Thus, although it is in principle necessary for the states in the MES under $LOCC_{\N}$ that are convertible by all-det-$LOCC_{\N}$ to be convertible by $LOCC_j$ for just one party $j$, it turns out that they are convertible by $LOCC_j$ for \emph{any} party $j$ as long as there exists a symmetry fulfilling (i) such that $S^{(j)}\neq\one$. In fact, if $S^{(j)}\neq\one$ $\forall j$ and $\forall S\in S_\Psi$ ($S\neq\one$), any symmetry which can be used by one party can be used by any other to perform a non-trivial transformation.

The investigation of LOCC protocols also plays a crucial role for the quantification of entanglement. Entanglement measures must be quantities that do not increase under LOCC transformations. Functionals that obey this rule can be constructed on pure mathematical grounds. However, they are often very hard to compute and/or to be provided with a clear operational meaning. In order to close this gap, in \cite{ScSa15,SaSc15} we introduced entanglement measures with a very clear operational meaning in terms of usefulness under LOCC manipulation and that can be computed whenever the possible LOCC transformations are characterized: the source and accessible entanglement. Given any pure state $\ket{\Psi}$ \footnote{In this paper we only consider pure states but the entanglement measures can be equally defined for mixed states.}, we define its accessible set, $M_a(\ket{\Psi})$, as the set of states which are reachable by LOCC from $\ket{\Psi}$. Analogously, we define its source set, $M_s(\ket{\Psi})$ as the set of states which can reach $\ket{\Psi}$ by LOCC. By considering some convenient measure $\mu$ in the set of LU-equivalence classes, we can then obtain the source volume $V_s(\ket{\Psi})=\mu[M_s(\ket{\Psi})]$ and the accessible volume $V_a(\ket{\Psi})=\mu[M_a(\ket{\Psi})]$ of a given state. The intuition behind these quantities is that the larger the accessible volume of a state is, the relatively more useful the state is and viceversa for the source volume. With this, the accessible entanglement and the source entanglement are defined by
\begin{equation}\label{eas}
E_{a}(\ket{\Psi})=\frac{V_a(\ket{\Psi})}{V_a^{sup}},\quad E_{s}(\ket{\Psi})=1-\frac{V_s(\ket{\Psi})}{V_s^{sup}},
\end{equation}
where $V_a^{sup}$ ($V_s^{sup}$) denote the supremum of the accessible (source) volume according to the measure $\mu$ so that the measures are normalized between 0 and 1 \footnote{It should be noted here that $E_a$ and $E_s$ are valid entanglement measures for any choice of measure $\mu$.}. Notice that, by construction, these quantities are entanglement measures, i.e.\ they do not increase under LOCC transformations. Note that these ideas can be used to define not only two entanglement measures, but two classes of entanglement measures \cite{ScSa15,SaSc15}. To this end one does not only consider transformations among states within the same Hilbert space, but for instance measures the amount of $n+1$--qubit states which can be transformed via LOCC into the $n$--qubit state of interest. Due to their operational meaning, all these measures are easily shown to be entanglement measures. Moreover, even if we consider in the following only the measures as described above, these results can also be applied to any other measure in these classes.

Since these quantities measure the relative usefulness of states, its computation can be restricted to a given SLOCC class (i.e.\ one can consider a measure $\mu$ supported only on a particular SLOCC class) \cite{ScSa15,SaSc15}. Nevertheless, its computation obviously relies on characterizing all possible LOCC transformations under the class of states of interest. Here, we have considered the particular class of LOCC protocols that can be implemented with finitely many rounds of classical communication $LOCC_{\N}$. On the analogy to the above concept we can introduce the accessible and source set, volume and entanglement under $LOCC_{\N}$ for a given state $\ket{\Psi}$: $M_{a,s}^{LOCC_{\N}}(\ket{\Psi})$, $V_{a,s}^{LOCC_{\N}}(\ket{\Psi})$ and $E_{a,s}^{LOCC_{\N}}(\ket{\Psi})$. Obviously then, $E_a^{LOCC_{\N}}(\ket{\Psi})$ and $E_s^{LOCC_{\N}}(\ket{\Psi})$ do not increase under $LOCC_{\N}$ protocols and quantify the usefulness of states for protocols in this setting. However, it should be stressed that these quantities are not necessarily entanglement measures. This is because they might increase under LOCC if it turns out that there exist LOCC transformations among pure states which are not implementable with finitely many rounds of communication. To see this, take any $LOCC_{\N}$ non-related states $\ket{\Psi}$ and $\ket{\Phi}$ such that $E_a^{LOCC_{\N}}(\ket{\Psi})<E_a^{LOCC_{\N}}(\ket{\Phi})$ and suppose that $\ket{\Psi}$ could be converted to $\ket{\Phi}$ by an infinite-round LOCC protocol. A similar argument applies to $E_s^{LOCC_{\N}}$. However, interestingly, these quantities can be used to bound the source and accessible volumes. Moreover, one can define analogous functionals under the constraint to all-det-$LOCC_{\N}$ protocols. The inclusion all-det-$LOCC_{\N}$ $\subset LOCC_{\N}\subset$ LOCC immediately yields
\begin{align}
V_a(\ket{\Psi})&\geq V_a^{LOCC_{\N}}(\ket{\Psi})\geq V_a^{all-det}(\ket{\Psi}),\nonumber\\
V_s(\ket{\Psi})&\geq V_s^{LOCC_{\N}}(\ket{\Psi})\geq V_s^{all-det}(\ket{\Psi}).
\end{align}

Thus, using the insights of Lemma 3 one can place non-trivial estimates for the source and accessible volumes via all-det-$LOCC_{\N}$ protocols. In particular, given any state $g|\Psi_s\rangle$ fulfilling the premises of Lemma 3 we see that any other state $h_1\otimes g_2 \otimes \ldots \otimes g_n\ket{\Psi}$ can be reached by $LOCC_1$ whenever there exists a probability distribution $\{p_k\}$ such that
\begin{equation}
G_1=\sum_{k} p_k (S_k^{(1)})^\dagger H_1 S_k^{(1)}
\end{equation}
for all those symmetries $\{S_k\}$ such that $[G_i,S_k^{(i)}]=0$ $\forall i>1$. Once the symmetries and states are specified it is then not difficult to determine all such states $h_1\otimes g_2 \otimes \ldots \otimes g_n\ket{\Psi}$. The process is then iterated to consider $LOCC_j$ protocols by other parties (if possible). If we start from a MES state as characterized in Theorem \ref{mes}, one obtains all possible all-det-$LOCC_{\N}$ protocols and can then determine $V_a^{all-det}$ and $V_s^{all-det}$ for any state obtainable via these operations, i.e. any state convertible and/or reachable via all-det-$LOCC_{\N}$ in the corresponding SLOCC class.

Note that in case $V_a^{sup}$ or $V_s^{sup}$ are known, one can easily bound the corresponding (normalized) entanglement measures $E_a$ (lower bound) and $E_s$ (upper bound). In particular, $V_s^{sup}$ is always given by the whole volume, as it is the source volume of a product state to which any state can be transformed to. In case these quantities are unknown, one can nevertheless compare states with each other using the unnormalized quantities $V_a$ and $V_s$.

\section{Examples of SLOCC classes where all-det-$LOCC_{\N}$ protocols are sufficient}\label{secpaulis}

In this section we show that for any SLOCC class of $n$-qubit states, whose representative has the stabilizer $\{\sigma_i^{\otimes n}\}_{i=0}^3$ it holds that $LOCC_{\N}$ coincides with all-det-$LOCC_{\N}$.
In fact, it can be easily shown that for the here considered SLOCC classes we have
\bea SEP=LOCC=LOCC_{\N}=all-det-LOCC_{\N}\eea
for pure state transformations.

Given the fact that the LOCC transformations can be easily characterized in this case, this implies first that the entanglement measures introduced in \cite{ScSa15} and discussed in the previous section can be easily computed.
Moreover, it becomes easier to introduce new entanglement measures for pure states for these SLOCC classes, as any function which is non-increasing under all-det-$LOCC_{\N}$ is a valid entanglement measure (for these SLOCC classes). That is, it is only necessary to show that the function is non-increasing under $LOCC_j$ for any party $j$. In particular, the classical communication among the parties does not need to be taken into account,
which simplifies the investigation enormously \footnote{Note however, that in contrast to entanglement monotones, where the average entanglement is non-increasing under LOCC, one needs to consider here only those unilocal $LOCC$ operations,
which lead deterministically to the final state, i.e. $LOCC_j$.}. It should be noted here that there exists no bipartite or three qubit state with only these symmetries. However, we are going to construct here explicit examples of $2^m$-qubit states ($m\geq2$) whose stabilizer we prove to be $\{\sigma_i^{\otimes 2^m}\}_{i=0}^3$.

Let us now show that for any SLOCC class, whose representative, $\ket{\Psi_s}$, has the stabilizer $\{\sigma_i^{\otimes n}\}_{i=0}^3$ it holds that $SEP = all-det-LOCC_{\N}$. In order to do so, we use the characterization of all separable transformations
among states in the 4-qubit generic SLOCC classes \cite{dVSp13,SaSc15} (whose stabilizer is $\{\sigma_i^{\otimes 4}\}_{i=0}^3$). It has been shown in \cite{dVSp13} that the only states which are reachable via SEP are
(up to permutations of the particles) of the form (i) $h^1 \otimes h^2_w \otimes h^3_w\otimes h^4_w \ket{\Psi}$, with $w\in\{x,y,z\}$ and $h^1 \neq h^1_w$ or (ii) $h^1 \otimes \one^{\otimes 3}\ket{\Psi}$ with $h^1 \not \propto \one$ arbitrary. Here, $h_w \propto \one + \alpha \sigma_w$ with $\alpha \in \R$.
Notice that this coincides with the set of reachable states by all-det-$LOCC_{\N}$ (and $LOCC_{\N}$) as given by Theorem 1 here using that in this case $S_{\Psi_s}=\{\sigma_i^{\otimes 4}\}_{i=0}^3$.
This result has been derived by considering Eq. (\ref{condsep}) and noticing that this equation can only hold if all but one operator $H_i$ commute with at least one symmetry.
Moreover, in \cite{SaSc15} we characterized all possible separable transformations among states in a generic SLOCC class of four qubits. Furthermore, we showed there that all these transformations can be realized via a all-det-$LOCC_{\N}$ protocol
by explicitly presenting the protocol. The characterization of the possible separable transformations as well as the corresponding all-det-$LOCC_{\N}$ transformations can be straightforwardly generalized to an arbitrary number of qubits. The reason is that if Eq.\ (\ref{condsep}) has to hold now for $n>4$ qubits, by tracing out here any $n-4$ parties we recover the same condition as in the case $n=4$ for any group of four parties. Thus, the convertible and reachable states are such that at least three local operators must commute with the same $\sigma_i$ for any group of four parties. Hence, all theses results generalize
to the cases where more than four qubits are considered. As a result we have that for the here considered SLOCC classes any separable pure state transformations can be realized with an all-det-$LOCC_{\N}$ transformation and therefore, in particular any
LOCC and $LOCC_{\N}$ transformation can be realized with such a simple protocol.\\

Let us now present a class of $2^m$-qubit states, with $m\geq 2$, whose stabilizer we show to be $\{\sigma_i^{\otimes {2^m}}\}_{i=0}^3$ in Appendix \ref{lemma6}. In order to do so we use the following notation. We call a four-qubit seed state of the form
\begin{align}
 \ket{\psi_2(\vec{\alpha})} = \sum_{i=0}^3 \alpha_i \sigma_i^{(2)} \sigma_i^{(4)} \ket{\phi^+}_{12}\ket{\phi^+}_{34}, \label{eq:DefGen4qb}
\end{align}
generic, if the normalized complex vector $\vec{\alpha} = (\alpha_0,\alpha_1,\alpha_2,\alpha_3) \in \C^4$ is such that the stabilizer of the state is $S_{\psi_2(\vec{\alpha})} = \{\sigma_i^{\otimes 4}\}$.
As shown in \cite{SpdV16} this is the case if $\alpha_i^2 \neq \alpha_j^2$ for $i \neq j$ and if there exists no $q \in \mathbb{C}\setminus\{1\}$ that yields
$\{\alpha_i^2\}_{i=0}^3 = \{q \alpha_i^2\}_{i=0}^3$. We label the qubits of
a $n$-qubit state by indices $0,\ldots,n-1$ and denote by $U_{i,j}$ the operator that permutes subsystems $i$ and $j$.
Using this notation we state the following lemma.

\begin{lemma}
\label{lem:2mstates}
 For almost all normalized $\vec{\alpha} \in \C^4$ the stabilizer of the recursively defined $2^m$-qubit state
 \begin{align*}
 \ket{\psi_m(\vec{\alpha})} \propto \frac{1}{2}(\one + U_{0,2^{m-1}}) \ket{\psi_{m-1}(\vec{\alpha})}^{\otimes 2}, \ m > 2,
\end{align*}
 is $S_{\psi_m(\vec{\alpha})} = \{\sigma_i^{\otimes 2^m}\}_{i=0}^3$.
\end{lemma}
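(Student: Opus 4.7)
The natural approach is induction on $m$, with the base case $m=2$ given by the paper's discussion of the seed state: for generic $\vec\alpha$, $S_{\psi_2(\vec\alpha)}=\{\sigma_i^{\otimes 4}\}_{i=0}^3$. Abbreviating $\ket{\psi_k}:=\ket{\psi_k(\vec\alpha)}$, the inclusion $\{\sigma_i^{\otimes 2^m}\}\subseteq S_{\psi_m}$ is immediate: $\sigma_i^{\otimes 2^{m-1}}\otimes\sigma_i^{\otimes 2^{m-1}}$ stabilizes $\ket{\psi_{m-1}}^{\otimes 2}$ by induction and commutes with $U_{0,2^{m-1}}$. By the discussion in Sec.~\ref{secprel} following \cite{Gour}, it suffices to show the reverse inclusion for unitary local symmetries $S=\bigotimes_{j=0}^{2^m-1}S^{(j)}$.

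The first key step is to pin down $S^{(0)}$ and $S^{(2^{m-1})}$. Because $\ket{\psi_m}$ lies in the range of $P_+=(\one+U_{0,2^{m-1}})/2$, its reduced density matrix $\rho_{0,2^{m-1}}$ is supported on the three-dimensional symmetric subspace $\mathrm{Sym}^2\C^2$. Using the Schmidt decomposition $\ket{\psi_{m-1}}=\sum_k c_k\ket{e_k}_0\otimes\ket{\tilde e_k}_{\mathrm{rest}}$---where $c_0=c_1=1/\sqrt 2$ since the inductive stabilizer forces the single-qubit marginal to be $\one/2$---a short direct computation shows that $\rho_{0,2^{m-1}}$ is in fact proportional to the projector onto $\mathrm{Sym}^2\C^2$. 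Unitary invariance of this rank-three projector under conjugation by $S^{(0)}\otimes S^{(2^{m-1})}$ forces the latter to commute with the swap, i.e.\ $S^{(0)}\otimes S^{(2^{m-1})}=S^{(2^{m-1})}\otimes S^{(0)}$, whence $S^{(0)}\propto S^{(2^{m-1})}$. Absorbing the scalar, we set $S^{(0)}=S^{(2^{m-1})}=:T$, so that $S$ commutes with $U_{0,2^{m-1}}$ and hence with $P_+$.

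This commutation turns $S\ket{\psi_m}=\ket{\psi_m}$ into $P_+(S_L\ket{\psi_{m-1}}\otimes S_R\ket{\psi_{m-1}})=P_+\ket{\psi_{m-1}}^{\otimes 2}$, where $S_L,S_R$ are the restrictions of $S$ to the first and second halves of the qubits. Expanding $\ket{\psi_{m-1}}=\sum_{x,y}\psi_{xy}\ket{x}_0\ket{y}_{\mathrm{rest}}$ and letting $a,b$ denote the corresponding coefficients of $S_L\ket{\psi_{m-1}},\,S_R\ket{\psi_{m-1}}$, matching components yields the tensor identity
\begin{equation}
 a_{xy}\,b_{x'y'}+a_{x'y}\,b_{xy'}=\psi_{xy}\,\psi_{x'y'}+\psi_{x'y}\,\psi_{xy'}.
\end{equation}
Setting $x=x'$ gives $a^{(x)}\otimes b^{(x)}=\psi^{(x)}\otimes\psi^{(x)}$ on the inner qubits, so $a^{(x)}=\lambda_x\psi^{(x)}$ and $b^{(x)}=\psi^{(x)}/\lambda_x$ for some scalar $\lambda_x$. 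The cross terms ($x\neq x'$), combined with the linear independence of $\psi^{(0)}$ and $\psi^{(1)}$ (orthogonal Schmidt vectors), then force $\lambda_0=\lambda_1$. Hence $S_L\ket{\psi_{m-1}}\propto\ket{\psi_{m-1}}$ and $S_R\ket{\psi_{m-1}}\propto\ket{\psi_{m-1}}$. The induction hypothesis (more precisely its consequence that any local invertible operator mapping $\ket{\psi_{m-1}}$ to a scalar multiple of itself is proportional to some $\sigma_i^{\otimes 2^{m-1}}$) then yields $S_L\propto\sigma_i^{\otimes 2^{m-1}}$ and $S_R\propto\sigma_j^{\otimes 2^{m-1}}$; since $S_L$ and $S_R$ share the first factor $T$, we must have $i=j$, and therefore $S\propto\sigma_i^{\otimes 2^m}$, with the overall scalar fixed to unity by $S\ket{\psi_m}=\ket{\psi_m}$.

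The step I expect to be most delicate is the claim that $\rho_{0,2^{m-1}}$ has full rank three on $\mathrm{Sym}^2\C^2$: any rank drop there would weaken the constraint on $S^{(0)}\otimes S^{(2^{m-1})}$ and potentially allow symmetries outside $\{\sigma_i^{\otimes 2^m}\}$. All subsequent steps propagate cleanly through the induction, and the genericity assumption on $\vec\alpha$ is only needed at the base case $m=2$ to secure the stabilizer of the seed state.
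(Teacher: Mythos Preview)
Your argument is essentially correct and takes a genuinely different, more direct route than the paper's. One small caveat: invoking Sec.~\ref{secprel}/\cite{Gour} to restrict to \emph{unitary} $S$ is circular, since that result only guarantees a representative with unitary stabilizer once the stabilizer is already known to be finite. The restriction is nonetheless valid because the known symmetries $\sigma_i^{\otimes 2^m}$ force every single-qubit marginal of $\ket{\psi_m}$ to be $\one/2$, so $\ket{\psi_m}$ is critical and its local stabilizer is automatically unitary (Kempf--Ness). Alternatively, your first step works verbatim for arbitrary invertible $S$: the range of $\rho_{0,2^{m-1}}$ is $\mathrm{Sym}^2\C^2$, and any invertible $A\otimes B$ preserving this three-dimensional subspace must satisfy $A\propto B$ (apply it to the product vectors $\ket{00},\ket{11},\ket{01}+\ket{10}$). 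Your ``delicate'' point is fine: the computation $\rho_{0,2^{m-1}}\propto\Pi_{\mathrm{Sym}}$ follows directly from the maximally mixed first-qubit marginal of $\ket{\psi_{m-1}}$, which is secured by the induction hypothesis.

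The paper's proof is considerably more elaborate. It first shows via auxiliary permutation symmetries $U_k$ (Observation~\ref{obs:1}) that every $S\in S_{\psi_m}$ has the form $\bigotimes_l(S^{(4l)}\otimes S_1\otimes S_2\otimes S_3)$, and then projects suitable pairs of qubits onto singlets to reduce $\ket{\psi_{m+1}(\vec\alpha)}$ to states $\ket{\psi_m(\lambda_i(\vec\alpha))}$ with transformed parameter vectors; this forces the paper to verify separately (Observation~\ref{obs:2}) that the maps $\lambda_i$ preserve genericity, and to treat the base case $m=3$ by hand. Your approach bypasses both detours: the reduced-state argument on qubits $0$ and $2^{m-1}$ directly yields $S^{(0)}\propto S^{(2^{m-1})}$, after which commutation with $P_+$ and your tensor identity split the problem cleanly into two copies of $\ket{\psi_{m-1}}$ with the \emph{same} $\vec\alpha$. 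Consequently you need genericity only at $m=2$, whereas the paper needs it at every level of the recursion. The paper's route yields Observation~\ref{obs:1} as a structural by-product, but for the lemma itself your argument is shorter and more transparent.
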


We suspend the proof of this lemma to the appendix for the sake of readability \footnote{Note that the generic set of four-dimensional vectors considered here slightly differs from the one above (for details see Appendix \ref{lemma6}).}.
It is however worthwhile to look at how $\ket{\psi_m(\vec{\alpha})}$ can be created from states of fewer qubits (see also Fig. \ref{fig:1}) \footnote{Note that these states are indeed multipartite entangled as they possess non--trivial symmetries and as any product state, $\ket{\Psi}\ket{\Phi}$, must have a symmetry of the from $\one \otimes S$, where $S$ denotes a non--trivial symmetry of $\ket{\Phi}$.}. The proof of Lemma \ref{lem:2mstates} fundamentally depends
on this construction.\\

If we want to construct the state $\ket{\psi_m(\vec{\alpha})}$ we can start by entangling two copies of $\ket{\psi_2(\vec{\alpha})}$ via the operator
$1/2 (\one+U_{0,4})$, which projects onto the symmetric subspace of particles zero and four, to obtain $\ket{\psi_3(\vec{\alpha})} \propto 1/2 (\one+U_{0,4}) \ket{\psi_2(\vec{\alpha})}^{\otimes 2}$. Then, two copies of the
resulting state are entangled via $1/2 (\one+U_{0,8})$ to obtain the 16-qubit state
\begin{align}
&\ket{\psi_4(\vec{\alpha})} \propto 1/2 (\one +U_{0,8})\ket{\psi_3(\vec{\alpha})}^{\otimes 2}\\
&\propto 1/8 (\one +U_{0,8})(\one +U_{8,12}) (\one+U_{0,4})\ket{\psi_2(\vec{\alpha})}^{\otimes 4}. \label{eq:ent16qb}
\end{align}
That is, the 16-qubit state can also be created by entangling four copies of $\ket{\psi_2(\vec{\alpha})}$ with operators $ 1/2(\one+U_{i,j})$ given in Eq. (\ref{eq:ent16qb}).
Continuing with this procedure we eventually obtain the desired $2^m$-qubit state and see that it can also be expressed as
\begin{align*}
 \ket{\psi_m(\vec{\alpha})} \propto K(m,2) \ket{\psi_2(\vec{\alpha})}^{\otimes 2^{m-2}}.
\end{align*}
Here, $K(m,2)$ denotes the operator that includes all entangling operators of the form $1/2 (\one+U_{i,j})$ that are needed to generate $\ket{\psi_m(\vec{\alpha})}$
from $2^{m-2}$ copies of $\ket{\psi_2(\vec{\alpha})}$ (see Fig. \ref{fig:1}), e.g. $K(4,2) = 1/8 (\one +U_{0,8})(\one +U_{8,12}) (\one+U_{0,4})$. Analogously, we could have
started with $2^{m-k}$ copies of the state $\ket{\psi_k(\vec{\alpha})}$, with $2 \leq k < m$, as elementary building blocks and entangled them to obtain the desired state.
That is, we can express the final state
as
\begin{align}
  \ket{\psi_m(\vec{\alpha})} \propto K(m,k) \ket{\psi_k(\vec{\alpha})}^{\otimes 2^{m-k}}, \label{eq:KtoM}
\end{align}
e.g. $K(4,3) = 1/2(\one + U_{0,8})$.
This is also illustrated in Fig. \ref{fig:1}. It is easy to see that $K(m,k)$ acts nontrivially only on qubits with index
$2^k\cdot l$, for $l \in \{0,\ldots,2^{m-k}-1\}$ and that
$K(m,m) = \one$ and $K(m,m-1) = 1/2(\one+U_{0,2^{m-1}})$ hold. Due to the definition it is moreover clear
 that the recursion $K(m,k) = 1/2 (\one + U_{0,2^{m-1}})(K(m-1,k) \otimes K(m-1,k))$ holds. One may verify that
 $K(m,k)$ is the product of $2^{m-k}-1$ projectors onto the symmetric subspace of two qubits (see also Fig. \ref{fig:1}).\\

\begin{figure}
 \includegraphics[width=0.45\textwidth]{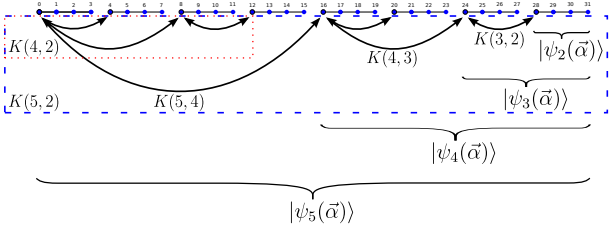}
 \caption{This figure depicts the structure of the operators $K(m,k)$ used to create the 32-qubit state $\ket{\psi_5(\vec{\alpha})}$ from eight
 copies of the four-qubit state $\ket{\psi_2(\vec{\alpha})}$. Each operator of the form $1/2(\one + U_{i,j})$ is represented by a bow connecting qubits $i$ and $j$. With the help of this figure, it is easy to see that the
 operator $K(4,2) = 1/8 (\one +U_{0,8})(\one +U_{8,12}) (\one+U_{0,4})$ is the product of all operators $1/2(\one + U_{i,j})$ that appear inside the region with dotted boundary. Note that the nesting of the bows corresponds to the order in which the operators associated to these
 bows have to be applied to the copies of $\ket{\psi_2(\vec{\alpha})}$, e.g. the operators corresponding to the innermost bows have to be applied first. An analogous statement holds for the operator $K(5,2)$, which is the product of all operators $1/2(\one + U_{i,j})$ that appear in the region with
 dashed boundary. One easily sees that in each copy of $\ket{\psi_2(\vec{\alpha})}$ the first qubit is singled out as it is the only
 one on which a non-trivial operation acts.}
  \label{fig:1}
\end{figure}

As shown before, all $LOCC_{\N}$ transformations in the SLOCC class of a quantum state $\ket{\psi_m(\vec{\alpha})}$ as
described in Lemma \ref{lem:2mstates} can be accomplished via deterministic steps. In combination with
our characterization of all-det-$LOCC_{\N}$ reachable and convertible states we moreover obtain a characterization of all $LOCC_{\N}$ transformations in these classes. In the next section we show, in contrast to these SLOCC classes, that there exist SLOCC classes for which certain $LOCC_{\N}$
transformations require a probabilistic step.

\section{all-det-$LOCC_{\N}$ protocols are not sufficient for $LOCC_{\N}$ pure-state transformations}\label{secexample}

Leaving aside the bipartite and the three-qubit case, all previous investigations on the subject and the results obtained here indicate that the possibility to transform a pure state by deterministic LOCC into an LU-inequivalent one is very rare. The conditions required on the states for such a protocol to be possible are so stringent that they are almost never met. However, when this is indeed the case it turns out in all cases studied so far that all these constraints only leave room to relatively simple LOCC protocols. To our knowledge, all instances known so far of LOCC transformations in the multipartite case (including the bipartite setting) can always be realized by all-det-$LOCC_{\N}$ protocols. This might had lead to conjecture that this is always the case. However, we have considered a particular example in \cite{short} that shows that this is not the case. There exist $LOCC_{\N}$ transformations among pure states that require more elaborate LOCC protocols which include non-deterministic intermediate steps. On the one hand, this shows that the rich structure of LOCC maps can be exploited to devise more sophisticated protocols that allow for otherwise impossible transformations. On the other hand, this leaves little hope to have a general characterization of $LOCC_{\N}$ convertible states. Although in many SLOCC classes all-det-$LOCC_{\N}$ transformations are the only possible protocols for pure-state transformation and convertible states are therefore characterized by Lemma 3, there exist other families in which this is not the case.

In this section we discuss a general construction that allows to obtain such examples and that we used to provide the particular instance presented in \cite{short}. Not surprisingly, the SLOCC class used to find this non-conventional behaviour is again the L-state family, which we have considered in Sec. \ref{seclstate}. We focus on states of the form $g_1\otimes g_2\otimes\one\otimes\one|L\rangle$ and hence denote them by $\{g_1,g_2\}$, $\{G_1,G_2\}$ or $\{\g_1,\g_2\}$. In order to obtain the desired examples, we provide a protocol that implements the transformation $\{\g_1,\g_2\}\to\{\h_1,\h_2\}$ by choosing an initial state such that
\begin{equation}\label{initialg1}
[G_i,S^{(i)}_k]\neq0,\quad \forall S_k\in S_L\, (S_k\neq\one)\textrm{ and } i=1,2.
\end{equation}
Thus, Lemma 3 guarantees that the initial state cannot be converted by an $LOCC_j$ protocol $\forall j$. Hence, $\{\g_1,\g_2\}$ is not convertible to any other state by all-det-$LOCC_{\N}$ and any deterministic transformation starting from this state necessarily requires intermediate non-deterministic steps. The protocol has two steps. First, party 1 implements a two-outcome POVM that leads to the intermediate states $\{h_1,g_2\}$ and $\{h_1\sigma_3,g_2\}$. The lack of symmetry of $G_2$ guarantees that the two outputs are LU-inequivalent. The idea now is that both $H_1$ and $\sigma_3H_1\sigma_3$ fulfill the premises of Lemma 3 so that an unlocking effect takes place and each intermediate state can now be transformed by $LOCC_2$ into the same state $\{h_1,h_2\}$. For this to be possible we choose
\begin{equation}\label{intermediateh1}
[H_1,U]=[H_1,U^2]=0.
\end{equation}
This means that all entries of $\h_1$ are equal, i.e. $\h_1=(x,x,x)$, for some $x\in \R$. In order to achieve the desired action, the initial POVM is taken with elements $M_1=\sqrt{p}h_1g_1^{-1}$ and $M_2=\sqrt{1-p}h_1\sigma_3g_1^{-1}$. This is a valid measurement only if $M_1^\dag M_1+M_2^\dag M_2=\one$ which amounts to
\begin{equation}
\g_1
    =x\left(\begin{array}{c}
         2p-1 \\
         2p-1 \\
         1 \\
       \end{array}
     \right).
\end{equation}
It should be clear then that both conditions (\ref{initialg1}) and (\ref{intermediateh1}) can be met by choosing properly the parameters ($\g_1$ must have its first two components equal). It only remains to find $LOCC_2$ transformations to map both $\{h_1,g_2\}$ and $\{h_1\sigma_3,g_2\}\simeq_{LU}\{h_1,g_2\sigma_3\}$ into the same final state $\{h_1,h_2\}$. For this, party 2 in each case can use a POVM with at most three elements:
\begin{equation}
M_1,M_2,M_3\propto h_2g_2^{-1},h_2Ug_2^{-1},h_2U^\dag g_2^{-1}
\end{equation}
for the branch corresponding to $\{h_1,g_2\}$ and
\begin{equation}
\tilde{M}_1,\tilde{M}_2,\tilde{M}_3\propto h_2\sigma_3g_2^{-1},h_2U\sigma_3g_2^{-1},h_2U^\dag\sigma_3 g_2^{-1}
\end{equation}
for the branch corresponding to $\{h_1,g_2\sigma_3\}$. Here we have used that $U^2=-U^\dag$ and that the used symmetry has to commute with $H_1$ now. Normalization requires then probability distributions $\{q_i\}$ and $\{\tilde{q}_i\}$ such that
\begin{align}
q_1H_2+q_2U^\dag H_2U+q_3UH_2U^\dag&=G_2,\nonumber\\
\tilde{q}_1H_2+\tilde{q}_2U^\dag H_2U+\tilde{q}_3UH_2U^\dag&=\sigma_3G_2\sigma_3.
\end{align}
If we denote $\g_2=(g_x,g_y,g_z)$ and $\h_2=(h_x,h_y,h_z)$, this amounts to both $(g_x,g_y,g_z)$ and $(-g_x,-g_y,g_z)$ belonging to the polytope generated by the convex hull of the points $(h_x,h_y,h_z)$, $(h_z,h_x,h_y)$ and $(h_y,h_z,h_x)$. It is straightforward to verify that this polytope is contained in the plane $x+y+z=h_x+h_y+h_z$. Finally, if we choose then $h_x+h_y+h_z=0$ and the origin is contained in the polytope, it will then include points both of the form $(g_x,-g_x,0)$ and $(-g_x,g_x,0)$ achieving the desired condition.

The particular instance of the above construction we have used to give the example of \cite{short} is the transformation
\begin{equation}
\left\{\left(
         \begin{array}{c}
           x \\
           x \\
           2x \\
         \end{array}
       \right),\left(
         \begin{array}{c}
           x \\
           -x \\
           0 \\
         \end{array}
       \right)\right\}\to\left\{\left(
         \begin{array}{c}
           2x \\
           2x \\
           2x \\
         \end{array}
       \right),\left(
         \begin{array}{c}
           x \\
           x \\
           -2x \\
         \end{array}
       \right)\right\},
\end{equation}
where $x\neq0$ is any real number small enough such that all the above vectors lead to positive operators (i.e.\ their norm is strictly smaller than 1/2). This transformation can be implemented by LOCC following the protocol explained above by choosing $p=3/4$, $q=(1/3,0,2/3)$ and $\tilde{q}=(1/3,2/3,0)$ (for any allowed value of $x$). However, it should be clear from the above discussion that other instances can be found.

\section{Conclusions}
Together with the companion paper \cite{short}, we have considered transformations among pure n-partite states which can be realized with $LOCC_{\N}$. Considering a large set of SLOCC classes we have provided a very simple
characterization of all reachable states by $LOCC_{\N}$ and have characterized as well the convertible states under all-det-$LOCC_{\N}$ protocols. Due to our characterization of all-det-$LOCC_{\N}$ convertibility and that previously known LOCC transformations fall, up to our knowledge, into this category which is a natural generalization of bipartite transformations, we addressed the questions whether all $LOCC_{\N}$ are realizable via a protocol which is deterministic in each step. We show that this is not the case by providing the general tools to construct examples of pairs of states, where the initial state can only be transformed into the final state if a probabilistic step is involved. Moreover, we identified SLOCC classes of arbitrary many qubits for which any separable pure state transformation can be realized by LOCC via an all-det-$LOCC_{\N}$ protocol and provided explicit examples of $2^m$-qubit states belonging to these SLOCC classes.

Taking into account the results on LOCC transformations (including infinitely many rounds) the following picture emerges. Due to the strong constraints on possible pure state transformations via LOCC the convertibility properties of multipartite states can be characterized to a large  extent of generality. These constraints are already imposed by convertibility under the larger class of separable operations \cite{Gour}, even though not all separable pure state transformations can be realized via LOCC \cite{HeSp15}. This highlights one of the differences between bipartite and multipartite entanglement manipulation. As we show here, considering a physically meaningful subset of LOCC, namely $LOCC_{\N}$, reveals again a very clear difference between the bipartite and multipartite case. This is due to the fact that considering a reasonable generalization of protocols realizing bipartite state transformations, that allows to focus on relatively simple protocols (all-det-$LOCC_{\N}$), is not sufficient to capture all possible state transformations in the multipartite case. Hence, while in the bipartite case we have that all-det-$LOCC_{\N} = LOCC_{\N}=LOCC=SEP$, in the multipartite case we have all-det-$LOCC_{\N}\subsetneq LOCC_{\N}$ and $LOCC \subsetneq SEP$. It remains open for future study whether infinite-round LOCC can provide an advantage over $LOCC_{\N}$, i.e. whether $LOCC_{\N}=LOCC$ holds for pure state transformations. Furthermore, the consideration of the differences and similarities between maps which can/cannot be implemented via all-det-$LOCC_{\N}$ and the multicopy case would be very interesting. Moreover, the detailed investigation of the power of states which play a dominant role in quantum information theory and/or condensed matter physics, such as matrix product states \cite{mps} and projected entangled pairs states \cite{VeCi04} would be very intriguing.

\begin{acknowledgments}
We thank Gilad Gour and Nolan Wallach for pointing out and explaining the proof presented in Ref. \cite{bookWallach} to us.
After completing this manuscript it was proven that generic n-qubit-states ($n\geq 5$) have actually a trivial stabilizer \cite{GoKr16}. This gives an immediate alternative proof of our Corollary 2 in this case, which is moreover extendable to infinite-round protocols.
The research of CS, DS and BK was funded by the Austrian Science Fund (FWF): Y535-N16. DS and BK thank in addition the DK-ALM: W1259-N27. The research of JIdV was funded by the Spanish MINECO through grants MTM2014-54692-P and MTM2014-54240-P and by the Comunidad de Madrid through grant QUITEMAD+CM S2013/ICE-2801.
\end{acknowledgments}

\begin{appendix}
\section{Proof of Lemma \ref{lemmaconv}\label{lemma3}}
In this appendix we present the details of the proof of  Lemma \ref{lemmaconv}, i.e. we show which states are convertible via $LOCC_1$. In order to improve readability we repeat the lemma here.\\ \\
\noindent \textit{{\bf Lemma \ref{lemmaconv}.}
A state $\ket{\Psi}\propto g\ket{\Psi_s}$ is convertible via $LOCC_1$ iff there exist $m$ symmetries $S_k\in S_{\Psi_s}$, with $m> 1$ and $H\in {\cal B} ({\cal H}_1)$, $H >0$ and $p_k> 0$ with $\sum_{k=1}^m p_k=1$ such that the following conditions hold
\bi \item[(i)] $[G_i,S_k^{(i)}]=0$ $\forall i>1$ and $\forall k \in \{1,\ldots,m\}$
\item[(ii)] $G_1=\sum_{k=1}^m p_k (S_k^{(1)})^\dagger H S_k^{(1)}$ and $H \neq S^{(1)} G_1 {S^{(1)}}^\dagger$ for any $S \in S_{\Psi_s}$ fulfilling (i).
\ei}

\begin{proof} Let us first show that the conditions are necessary. Let $\{A_k\}$ denote the measurement operators of the POVM applied by party 1 \footnote{We only need to consider those $A_k$ which are not LU-equivalent to each other, i.e. $A_j^\dagger A_j \not\propto A_k^\dagger A_k$, $\forall 1\leq j,k \leq m$.}. Denoting by $\ket{\Phi}\propto h\ket{\Psi_s}$ the output state, we have that
\begin{equation}\label{Eqconv1}
\frac{1}{\sqrt{p_k}}A_k g_1 \bigotimes_{i\neq1} g_i\ket{\Psi_s}\simeq_{LU} \frac{n_\Psi}{n_\Phi}h\ket{\Psi_s},
\end{equation}
where $n_\Psi$ ($n_\Phi$) denotes the norm of $g\ket{\Psi_s}$ ($h\ket{\Psi_s}$) and $p_k=||A_k g\ket{\Psi_s}||^2/||g\ket{\Psi_s}||^2$ (hence $p_k> 0$ and $\sum_k p_k=1$).
Then the outcome is deterministic iff
\begin{equation}\label{Eq_output}
\frac{1}{\sqrt{p_1}}A_1 g_1 \bigotimes_i g_i\ket{\Psi_s}= \frac{1}{\sqrt{p_k}}U_k^{(1)} A_k g_1  \bigotimes_i U_k^{(i)} g_i\ket{\Psi_s},
\end{equation}
$\forall k>1$ and where $U_k=U_k^{(1)}\otimes \ldots \otimes U_k^{(n)}$ is unitary. Using the fact that the only local operations which leave $\ket{\Psi_s}$ invariant are the elements of $S_\Psi$ and the fact that all the local matrices occurring in the expression above must be invertible \footnote{Note that otherwise one would consider different SLOCC classes.}, these conditions are equivalent to the existence of local unitaries $U_k$ and symmetries $S_k\in S_\Psi$ such that $\forall k$
\bi \item[(a)] $A_1g_1= r_k^{(1)} U_k^{(1)} A_k g_1 S_k^{(1)} $
\item[(b)] $g_i=r_k^{(i)} U_k^{(i)} g_i S_k^{(i)}  $ $\forall i>1$.\ei
Here, $\prod_{i=1}^n  r_k^{(i)}=\sqrt{p_1/p_k}$ for any $k$.
Taking the determinant on the right and left hand side of the equation in (b) and taking into account that both $U_i^{(k)}$ and $S_i^{(k)}$ are unitary leads to $|r_k^{(i)}|=1$ for any $k$ and $i>1$. Hence, $G_i=(S_k^{(i)})^\dagger G_i S_k^{(i)}$ or equivalently $[G_i,S_k^{(i)}]=0$ $\forall i>1$ and $\forall k$. Using now the equation in (a) we obtain
\begin{equation}
g_1^\dagger A_1^\dagger A_1g_1= \frac{p_1}{p_k} (S_k^{(1)})^\dagger g_1^\dagger A_k^\dagger A_k g_1 S_k^{(1)},
\end{equation}
where we have used that $|r_k^{(1)}|^2 =p_1/p_k$ (given that we found above that $|r_k^{(i)}|=1$ for any $k$ and $i>1$). Defining the positive and normalized operator
\begin{equation}\label{Eqconv2}
H=\frac{g_1^\dagger A_1^\dagger A_1g_1}{\tr(g_1^\dagger A_1^\dagger A_1g_1)}
\end{equation}
and using that $\sum_k A_k^\dagger A_k =\one$, we finally obtain $G_1=\sum_k p_k S_k H S_k^\dagger$ (where $S_1=\one$). Using that the Hermitian operators $G_i$ commute with $S^{(i)}$ iff they commute with $(S^{(i)})^\dagger$ and redefining $S_k$ by $S_k^\dagger$ leads to the conditions $(i)$ and $(ii)$ in the statement of the lemma.

Let us now show the sufficiency of the conditions. In order to see that states $g \ket{ \Psi_s}$ fulfilling conditions (i) and (ii) are convertible via $ LOCC_1$ consider the following protocol. Party 1 implements the measurement $ \{A_k=  \sqrt{p_k} hS_kg_1^{-1} \}_{k=1}^m$
which is a valid POVM due to condition (ii). Then the other parties apply depending on the measurement outcome the LUs $U_k^{(i)}$ which fulfill $U_k^{(i)} g_i=  g_i S_k^{(i)}$. Note that due to condition (i) these LUs have to exist. Applying this  $LOCC_1$
protocol to a state $g  \ket{ \Psi_s}$ leads to a final state $h\otimes g_2 \otimes \ldots \otimes g_n\ket{\Psi_s}$ where $H=h^ \dagger h$ fulfills condition (ii) (involving only the symmetries for which condition (i) holds).
In fact it can be easily seen \footnote{Using Eq. (\ref{Eqconv2}) one obtains that $A_1  \propto W hg_1^{-1},$ where $h^ \dagger h=H$ and W is a unitary. Moreover, as shown before for a transformation to be possible $H$ has to fulfill that $G_1=\sum_{k=1}^m p_k (S_k^{(1)})^\dagger H S_k^{(1)}$ with the symmetries fulfilling condition (i). With this it follows directly from Eq. (\ref{Eqconv1}) that the final states of this transformation have to be of the form $h\otimes g_2 \otimes \ldots \otimes g_n\ket{\Psi_s}$ where $H=h^ \dagger h$ fulfills condition (ii) for symmetries for which condition (i) holds.} that these are the only possible final states of a $LOCC_1$ protocol.
Note that the states $g  \ket{ \Psi_s}$ and $h\otimes g_2 \otimes \ldots \otimes g_n\ket{\Psi_s}$ are LU-eqivalent iff there exist a symmetry $S\in S_{\Psi_s}$ fulfilling condition (i) and $H=S^{(1)} G_1 (S^{(1)})^\dagger$.
\end{proof}

\section{Proof of Lemma \ref{lem:2mstates}\label{lemma6}}

This appendix is devoted to the proof of Lemma \ref{lem:2mstates}, which will be done inductively.
That is, we prove that for generic normalized $\vec{\alpha} \in \C^4$ the stabilizer of the recursively defined $2^m$-qubit state
 \begin{align*}
 \ket{\psi_m(\vec{\alpha})} \propto \frac{1}{2}(\one + U_{0,2^{m-1}}) \ket{\psi_{m-1}(\vec{\alpha})}^{\otimes 2}, \ m > 2,
\end{align*}
is $S_{\psi_m(\vec{\alpha})} = \{\sigma_i^{\otimes 2^m}\}_{i=0}^3$ for almost every $\vec\alpha$, where the qubits are labeled from $0$ to $2^m$ and $U_{i,j}$ permutes particles $i$ and $j$.
We continue with the notation introduced in the paragraph after Lemma \ref{lem:2mstates} in the main text. There,
we commented in detail on the way the states $\ket{\psi_m(\vec{\alpha})}$ can be construced from many copies of states $\ket{\psi_k(\vec{\alpha})}$, with $k<m$. Recall that
\begin{align}
  \ket{\psi_m(\vec{\alpha})} \propto K(m,k) \ket{\psi_k(\vec{\alpha})}^{\otimes 2^{m-k}}, \label{eq:KtoM2}
\end{align}
as we have seen in the main text (see Eq. (\ref{eq:KtoM})).
First, we state and prove two observations that we use in the proof of Lemma \ref{lem:2mstates}.

\begin{observation}
\label{obs:1}
 If $S_{\psi_{k}(\vec{\alpha})} = \{\sigma_i^{\otimes 2^k}\}_{i=0}^3$ for any $k \in \{2,\ldots,m-1\}$, any local symmetry of $\ket{\psi_{m}(\vec{\alpha})}$ has to be of the form
 $S = \bigotimes_{l=0}^{2^{m-2}} (S^{(4l)} \otimes S_1 \otimes S_2 \otimes S_3)$.
\end{observation}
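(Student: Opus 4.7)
The plan is to proceed by induction on $m$. For the base case $m = 2$, the claim is trivially satisfied: there is only one block ($l = 0$) and the claimed form reduces to the generic factorization of a local operator on four qubits. For the inductive step ($m \geq 3$), we exploit the recursive identity $\ket{\psi_m(\vec{\alpha})} \propto (\one + U_{0, 2^{m-1}}) \ket{\psi_{m-1}(\vec{\alpha})}_A \otimes \ket{\psi_{m-1}(\vec{\alpha})}_B$, where $A$ and $B$ denote the first and second halves of $2^{m-1}$ qubits each. Given a local symmetry $S$ of $\ket{\psi_m}$, we decompose it as $S = (s^{(0)} \otimes S_{R_A}) \otimes (s^{(2^{m-1})} \otimes S_{R_B})$, isolating the two ``connecting'' qubits $0$ and $2^{m-1}$ that are the only qubits acted upon non-trivially by the projector.

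The core of the plan is to expand $\ket{\psi_m}$ in the computational basis of these two qubits,
\[ \ket{\psi_m} \propto \sum_{i,j \in \{0,1\}} \ket{ij}_{0, 2^{m-1}} \otimes \ket{\Xi_{ij}}, \]
where $\ket{\Xi_{ij}} = \ket{\chi_i}_{R_A}\ket{\chi_j}_{R_B} + \ket{\chi_j}_{R_A}\ket{\chi_i}_{R_B}$ is the $R_A \leftrightarrow R_B$ symmetrization of products of the conditional states $\ket{\chi_i} = \bra{i}_0 \ket{\psi_{m-1}}$. The equation $S \ket{\psi_m} = \ket{\psi_m}$ then translates into the system
\[ \sum_{i,j} s^{(0)}_{i'i}\, s^{(2^{m-1})}_{j'j}\, (S_{R_A} \otimes S_{R_B}) \ket{\Xi_{ij}} = \ket{\Xi_{i'j'}} \quad \forall\, i', j' \in \{0,1\}. \]
Exploiting the linear independence of $\ket{\chi_0}$ and $\ket{\chi_1}$ (guaranteed by the generic choice of $\vec{\alpha}$), the permutation symmetry $\ket{\Xi_{ij}} = \ket{\Xi_{ji}}$, and the invariance of $\ket{\psi_m}$ under the nonlocal swap $U_{0, 2^{m-1}}$ (which implies that exchanging $s^{(0)} \leftrightarrow s^{(2^{m-1})}$ in $S$ yields a second local symmetry), one can disentangle the first-qubit parts from the rest-qubit parts to conclude that $S_{R_A}$, together with some operator on qubit $0$, is a local symmetry of $\ket{\psi_{m-1}}_A$, and analogously for $S_{R_B}$ and $\ket{\psi_{m-1}}_B$.

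Applying the stabilizer hypothesis $S_{\psi_{m-1}(\vec{\alpha})} = \{\sigma_i^{\otimes 2^{m-1}}\}$, we obtain $S_{R_A} = \sigma_k^{\otimes (2^{m-1}-1)}$ and $S_{R_B} = \sigma_{k'}^{\otimes (2^{m-1}-1)}$ for some $k, k' \in \{0,1,2,3\}$. A final consistency check, using the coupling between the halves imposed by the projector $(\one + U_{0, 2^{m-1}})$, forces $k = k'$, so the operators on qubits $4l+1, 4l+2, 4l+3$ for every $l$ are equal to the common Pauli $\sigma_k$, matching the claimed block form with $S_1 = S_2 = S_3 = \sigma_k$ and the $S^{(4l)}$ unconstrained by this argument.

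The main obstacle will be the step of deriving the ``symmetry-of-$\ket{\psi_{m-1}}$'' property from the four mixed equations above: the sum structure of $\ket{\Xi_{ij}}$ permits non-trivial cancellations between its two product terms that do not a priori correspond to term-by-term matching of the two sides. Resolving this requires a careful linear-algebra argument combining the generic position of $\{\ket{\chi_0}, \ket{\chi_1}\}$ with the extra constraint supplied by the $U_{0, 2^{m-1}}$-invariance of $\ket{\psi_m}$, which pairs the equations for $(i',j')$ and $(j',i')$ and thereby pins down the action of $S_{R_A} \otimes S_{R_B}$ on the individual $\ket{\chi_i} \otimes \ket{\chi_j}$.
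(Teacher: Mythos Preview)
Your plan is sound and can be completed, but it is a genuinely different route from the paper's. The paper does \emph{not} induct on $m$. Instead, for each $k\in\{2,\ldots,m-1\}$ it exhibits a nonlocal permutation $U_k$ (swapping qubits $i$ and $2^k+i$ for $1\le i\le 2^k-1$) that leaves $\ket{\psi_m}$ invariant; given a local symmetry $S$, the group commutator $U_kSU_k^\dagger S^{-1}$ is again a local symmetry, now of the special form $\one\otimes Y_k\otimes\one\otimes Y_k^{-1}\otimes\one$. Projecting the eigenvalue equation onto carefully chosen product states $\ket{\Phi_{i,j}}$ collapses this to $(\one\otimes Y_k)\ket{\psi_k}\propto\ket{\psi_k}$, whence the hypothesis on $S_{\psi_k}$ gives $Y_k\propto\one$, i.e.\ $S^{(j)}=S^{(2^k+j)}$ for $1\le j\le 2^k-1$. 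Running $k$ from $2$ to $m-1$ assembles the block pattern.

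Your approach works only at the top level of the recursion and uses just the $k=m-1$ case of the hypothesis (so the induction framing is actually unnecessary: you are giving a direct argument). The ``main obstacle'' you flag is in fact cleaner than you anticipate: the left-hand side of your displayed system always lies in $V_A\otimes V_B$ with $V_A:=S_{R_A}\,\mathrm{span}\{\ket{\chi_0},\ket{\chi_1}\}$, so specialising to $(i',j')=(0,0)$ and $(1,1)$ gives the nonzero product vectors $\ket{\chi_0}\ket{\chi_0},\ket{\chi_1}\ket{\chi_1}\in V_A\otimes V_B$, which forces $\ket{\chi_0},\ket{\chi_1}\in V_A$; a dimension count then yields $V_A=\mathrm{span}\{\ket{\chi_i}\}$, and the desired ``$S_{R_A}$ extends to a symmetry of $\ket{\psi_{m-1}}$'' follows immediately. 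The consistency check $k=k'$ also goes through, though it needs a short $2\times2$ matrix computation (reduce to $(p\otimes q)[(\one\otimes d)+(d\otimes\one)\mathrm{SWAP}]=\one+\mathrm{SWAP}$ and check there is no solution with $d\propto\sigma_l$, $l\ne0$) rather than a one-line appeal to ``coupling''. Once completed, your argument actually delivers more than the Observation claims: not only $S_1=S_2=S_3=\sigma_k$, but all $S^{(4l)}$ with $4l\notin\{0,2^{m-1}\}$ equal $\sigma_k$ as well---substantially narrowing the gap to Lemma~6. The price you pay relative to the paper's commutator trick is that the two residual steps (span preservation and $k=k'$) must be argued by hand rather than read off from a known stabilizer.
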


\proof{
In the following we consider $k \in \{2,\ldots,m-1\}$ and denote by $U_k$ the operator that permutes particles $i$ and $2^k+i$ for any $i \in \{1,\ldots,2^k-1\}$.
We first show that
\begin{align}
 U_k\ket{\psi_m(\vec{\alpha})} = \ket{\psi_m(\vec{\alpha})}. \label{eq:Usymm}
\end{align}
That this is the case, can be seen by looking at the structure of $\ket{\psi_m(\vec{\alpha})}$ as illustrated in Fig. \ref{fig:1} and using that
the projector on the symmetric subspace, $K(k+1, k) = 1/2(\one + U_{0,2^k})$ is invariant under the exchange of the particles with index $0$ and $2^k$.
More precisely, in order to prove Eq. (\ref{eq:Usymm}) note first that $[U_k,K(m,k)]=0$ as these operators act on different subspaces. Moreover, $U_k$ applied to
$2^{m-k}$ copies of $\ket{\psi_k(\vec{\alpha})}$ acts
nontrivially only on two copies of $\ket{\psi_k(\vec{\alpha})}$. As the operator
$U_{0,2^k}U_k$ simply permutes these two copies of $\ket{\psi_k(\vec{\alpha})}$ it holds that
\begin{align}
 U_{0,2^k}U_k \ket{\psi_k(\vec{\alpha})}^{\otimes 2^{m-k}} =  \ket{\psi_k(\vec{\alpha})}^{\otimes 2^{m-k}},
\end{align}
which is equivalent to
\begin{align}
 U_k \ket{\psi_k(\vec{\alpha})}^{\otimes 2^{m-k}} = U_{0,2^k}\ket{\psi_k(\vec{\alpha})}^{\otimes 2^{m-k}}.
\end{align}
Finally, note
that $K(m,k) = \tilde{K}(m,k) 1/2(\one + U_{0,2^k})$, for an operator $\tilde{K}(m,k)$. Hence, we have $K(m,k)U_{0,2^k} = K(m,k)$. Using all these relations it is straightforward to
see that,
\begin{align}
 &U_k\ket{\psi_m(\vec{\alpha})} \propto U_k K(m,k) \ket{\psi_k(\vec{\alpha})}^{\otimes 2^{m-k}} \nonumber\\
 &= K(m,k) U_k \ket{\psi_k(\vec{\alpha})}^{\otimes 2^{m-k}} \nonumber\\
 &= K(m,k)\ket{\psi_k(\vec{\alpha})}^{\otimes 2^{m-k}} \propto \ket{\psi_m(\vec{\alpha})}, \label{eq:Ul}
\end{align}
which proves Eq. (\ref{eq:Usymm}).
For a local symmetry
$S = S^{(1)} \otimes \ldots \otimes S^{(2^m)} \in S_{\psi_m(\vec{\alpha})}$ we therefore have that $U_kSU_k^\dagger \in S_{\psi_m(\vec{\alpha})}$. Moreover, as $S^{-1} \in S_{\psi_m(\vec{\alpha})}$
we have
\begin{align}
U_kSU_k^\dagger S^{-1} = \one_2 \otimes Y_k \otimes \one_2 \otimes Y_k^{-1} \otimes \one_{2^m-2^{k+1}} \in S_{\psi_m(\vec{\alpha})}, \label{eq:symmY}
\end{align}
where $Y_k = S^{(2^k+1)}(S^{(1)})^{-1} \otimes \ldots \otimes S^{(2^{k+1}-1)}(S^{(2^{k}-1)})^{-1}$ and where the index of the identity operators indicate the dimension of the subspace on which they act on.\\

In the following we express for any $l \in \{0,...,2^{m-k}-1\}$ the $2^k$-qubit state $\ket{\psi_k(\vec{\alpha})}_{2^kl,\ldots,2^k(l+1)-1}$ of qubits with indices $2^kl$ to $2^k(l+1)-1$
in the Schmidt decomposition of its first particle, i.e. the particle with index $2^kl$, with
the rest, i.e.
\begin{align*}
 \ket{\psi_k(\vec{\alpha})}_{2^kl,\ldots,2^k(l+1)-1} = d_1 \ket{a_1}_{2^k l}\ket{\phi_1}_{\vec{l}}+d_2 \ket{a_2}_{2^k l}\ket{\phi_2}_{\vec{l}}.
\end{align*}
Note that the states occuring on the right-hand-side of the equation above depend on the value of $k$. However, as we never consider in the following this decomposition for different values of $k$, we do not
indicate this dependency explicitely.
For example, we express the state $\ket{\psi_3(\vec{\alpha})}_{0,\ldots,7}$ of the first eight qubits as
\begin{align*}
\ket{\psi_3(\vec{\alpha})}_{0,\ldots,7} = d_1 \ket{a_1}_{0}\ket{\phi_1}_{1,\ldots,7}+d_2 \ket{a_2}_{0}\ket{\phi_2}_{1,\ldots,7}.
\end{align*}
Here the vector $\vec{l}$ in $\ket{\phi_i}_{\vec{l}}$ indicates that this is a state of particles $2^kl+1,\ldots,2^k(l+1)-1$.
Clearly, $d_i \neq 0$ as the $2^k$-qubit state is multipartite entangled and $\bk{a_i|a_j} = \bk{\phi_i|\phi_j} = \delta_{i,j}$.\\

Let us now define the $(2^m-2^{k+1}+2)$-qubit state
\begin{align*}
 \ket{\Phi_{i,j}} = \ket{a_i}_0 \ket{a_j}_{2^k} \bigotimes_{l=2}^{2^{m-k}-1} \ket{a_i}_{2^k l} \ket{\phi_i}_{\vec{l}}, \ \text{for} \ i,j \in \{0,1\}.
\end{align*}
Note that this state does not include particles $1,\ldots,2^k-1$ and $2^k+1,\ldots,2^{k+1}-1$. Next, we project both sides of the eigenvalue equation $U_kSU_k^\dagger S^{-1}\ket{\psi_m(\vec{\alpha})} = \ket{\psi_m(\vec{\alpha})}$ on the state $\ket{\Phi_{i,j}}$, i.e. we get

\begin{align}
 \bk{\Phi_{i,j}|U_kSU_k^\dagger S^{-1}|\psi_m(\vec{\alpha})} = \bk{\Phi_{i,j}|\psi_m(\vec{\alpha})}. \label{eq:EVEq0}
\end{align}

In order to simplify this expression recall that
$K(m,k)$ in Eq. (\ref{eq:KtoM2}) can be expressed as $K(m,k) = \tilde{K}(m,k) \frac{1}{2}(\one+U_{0,2^k})$ for an operator $\tilde{K}(m,k)$ that does not act on the particle with index $2^k$ and which fulfills
$\tilde{K}(m,k) \ket{\Phi_{i,j}} \propto \ket{\Phi_{i,j}}$.
\begin{widetext}

Using the properties of $K(m,k)$ it is easy to see that,
\begin{align}
 K(m,k)\ket{\Phi_{i,j}} \propto \frac{1}{2}(\one+U_{0,2^k})\ket{\Phi_{i,j}} \propto (\ket{a_i}_0 \ket{a_j}_{2^k} + \ket{a_j}_0 \ket{a_i}_{2^k}) \bigotimes_{l=1}^{2^{m-k}-1}  \ket{a_i}_{2^k l} \ket{\phi_i}_{\vec{l}}.
\end{align}
Using this relation, a straightforward calculation reveals that the right-hand-side of Eq. (\ref{eq:EVEq0}) simplifies as follows,
\begin{align*}
 &\bk{\Phi_{i,j}|\psi_m(\vec{\alpha})} =  \bk{\Phi_{i,j}|K(m,k)|\psi_k(\vec{\alpha})}^{\otimes 2^{m-k}}
  \propto (\bra{a_i}_0 \bra{a_j}_{2^k} + \bra{a_j}_0 \bra{a_i}_{2^k}) \bigotimes_{l=1}^{2^{m-k}-1} \bra{a_i}_{2^k l} \bra{\phi_i}_{\vec{l}} \ket{\psi_k(\vec{\alpha})}^{\otimes 2^{m-k}}\\
  &\propto \ket{\phi_i}_{\vec{0}} \ket{\phi_j}_{\vec{1}} + \ket{\phi_j}_{\vec{0}} \ket{\phi_i}_{\vec{1}}.
\end{align*}
\end{widetext}

Note that the symmetry $U_kSU_k^\dagger S^{-1}$ acts trivially on all particles on which $\ket{\Phi_{i,j}}$ is defined (see Eq. (\ref{eq:symmY})).
Hence, the left-hand-side of Eq. (\ref{eq:EVEq0}) can be easily evaluated to obtain
\begin{align*}
 &(Y_k \otimes Y_k^{-1})(\ket{\phi_i}_{\vec{0}} \ket{\phi_j}_{\vec{1}} + \ket{\phi_j}_{\vec{0}} \ket{\phi_i}_{\vec{1}}) \\
 &\propto (\ket{\phi_i}_{\vec{0}} \ket{\phi_j}_{\vec{1}} + \ket{\phi_j}_{\vec{0}} \ket{\phi_i}_{\vec{1}}), \ i,j \in \{0,1\}.
\end{align*}
We hence obtain $Y_k\ket{\phi_i}=\lambda_i \ket{\phi_i}$ for $i = 0,1$ and some $\lambda_i \neq 0$. For $i \neq j$ we
subsequently get $\lambda_0 = \lambda_1 = \lambda \neq 0$. Using these relations we obtain the following eigenvalue equation,
\begin{align}
 &(\one \otimes Y_k)\ket{\psi_k(\vec{\alpha})} = d_1 \ket{a_1} Y_k\ket{\phi_1}+d_2 \ket{a_2} Y_k\ket{\phi_2} \nonumber\\
 &= d_1 \ket{a_1}\lambda\ket{\phi_1}+d_2 \ket{a_2}\lambda\ket{\phi_2} = \lambda \ket{\psi_k(\vec{\alpha})}. \label{eq:EVEq1}
\end{align}
Recall that the requirement of the Observation is that $\ket{\psi_k(\vec{\alpha})}$, for $2 \leq k < m$, has only symmetries of the form $\{\sigma_i^{\otimes 2^k}\}_{i=0}^3$ and hence Eq. (\ref{eq:EVEq1}) can only be
satisfied if $Y_k \propto \one$. Looking at the definition of
$Y_k$ we see that this yields $S^{(j)} \propto S^{(2^k+j)}$, for $j \in \{1,\ldots,2^k-1\}$ and we can set w.l.o.g.
\begin{align}
 S^{(j)} = S^{(2^k+j)}, \ \forall j \in \{1,\ldots,2^k-1\}, 2 \leq k < m. \label{eq:SymEq}
\end{align}
For $k=2$ we see from Eq. (\ref{eq:SymEq}) that $S^{(1)} = S^{(5)} = S_1, S^{(2)} = S^{(6)} = S_2$ and $S^{(3)} = S^{(7)} = S_3$ for some invertible operators $S_1, S_2$ and $S_3$.
For $k=3$ it is then easy to see that
also $S^{(9)} = S^{(13)} = S_1, S^{(10)} = S^{(14)} = S_2$ and $S^{(11)} = S^{(15)} = S_3$, etc. Hence, we obtain
\begin{align*}
 S^{(4l+1)} = S_1, S^{(4l+2)} = S_2 \ \text{and}\ S^{(4l+3)} = S_3
\end{align*}
for all $ l \in \{0,...,2^{m-2}-1\}$, which proves the observation. \qed
}\\

Before we state the second observation needed in the proof of Lemma \ref{lem:2mstates} we define for $\ket{\psi} \in (\C^2)^{\otimes 4}$ the maps
\begin{align*}
 \Lambda_1(\psi) = \bra{\psi^-}_{2,6}\bra{\psi^-}_{3,7} \frac{1}{2}(\one + U_{0,4})\ket{\psi}^{\otimes 2},\\
 \Lambda_2(\psi) = \bra{\psi^-}_{1,5}\bra{\psi^-}_{3,7} \frac{1}{2}(\one + U_{0,4})\ket{\psi}^{\otimes 2},\\
 \Lambda_3(\psi) = \bra{\psi^-}_{1,5}\bra{\psi^-}_{2,6} \frac{1}{2}(\one + U_{0,4})\ket{\psi}^{\otimes 2},
\end{align*}
where $\ket{\psi^-} = 1/\sqrt{2} (\ket{01}-\ket{10})$ is the singlet state.
It is easy to show that the maps
\begin{align}
 &\lambda_i: \C^4 \rightarrow \C^4 \label{eq:lambda}\\
 &\vec{\alpha} \mapsto \lambda_i(\vec{\alpha}), \ s.t. \ \frac{\Lambda_i(\psi_2(\vec{\alpha}))}{\|\Lambda_i(\psi_2(\vec{\alpha}))\|} = \ket{\psi_2(\lambda_i(\vec{\alpha}))} \nonumber
\end{align}
are well-defined for $i=1,2,3$. Using the notation $(\lambda_i \circ \lambda_j)(\vec{\alpha}) = \lambda_i(\lambda_j(\vec{\alpha}))$ we can state the next observation, which implies that certain
projections of $\ket{\psi_{m+1}(\vec{\alpha})}$ onto singlet states lead to a state
$\ket{\psi_m(\vec{\gamma})}$ which is constructed from a generic four-qubit state $\ket{\psi_2(\vec{\gamma})}$, even if this procedure is applied recursively.

\begin{observation}
\label{obs:2}
For almost all normalized $\vec{\alpha} \in \C^4$, any integer $N$ and any $\vec{k} = (k_1,\ldots,k_N) \in \{1,2,3\}^N$ it holds that $S_{\psi_2((\bigcirc_{i=1}^N\lambda_{k_i})(\vec{\alpha}))} = \{\sigma_i^{\otimes 4}\}_{i=0}^3$.
\end{observation}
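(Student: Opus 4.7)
The plan is to exploit two facts: the set $B\subset\C^4$ of normalized vectors $\vec\gamma$ for which $\ket{\psi_2(\vec\gamma)}$ fails to be generic is a proper algebraic subvariety (hence has Lebesgue measure zero), and each map $\lambda_i$ is an algebraic (polynomial-up-to-normalization) self-map of $\C^4$. Combining these with a standard countable-union argument should yield the claim.

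First I would make $B$ explicit. According to the genericity criterion recalled above Lemma \ref{lem:2mstates}, $\vec\gamma\in B$ iff either $\gamma_i^2=\gamma_j^2$ for some $i\neq j$, or there exists $q\in\C\setminus\{1\}$ with $\{\gamma_i^2\}_{i=0}^3=\{q\gamma_i^2\}_{i=0}^3$. The first family defines finitely many quadric hypersurfaces. The second, after enumerating the $4!$ permutations $\pi\in S_4$ that could realize the set equality $\gamma_i^2=q\gamma_{\pi(i)}^2$ and eliminating $q$, reduces in each case to a finite system of polynomial equations in the $\gamma_i$ that fails to be satisfied identically. Therefore $B$ is contained in a finite union of proper algebraic hypersurfaces of $\C^4$, and in particular has Lebesgue measure zero.

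Next I would compute each $\lambda_i$ explicitly enough to see that it is polynomial of degree two in $\vec\alpha$ before normalization. Expanding $\ket{\psi_2(\vec\alpha)}^{\otimes 2}$, projecting onto the symmetric subspace $\tfrac12(\one+U_{0,4})$, and then contracting with the two singlet states in the definition of $\Lambda_i$ produces a four-qubit vector whose coefficients in the Bell basis $\{\sigma_i^{(2)}\sigma_i^{(4)}\ket{\phi^+}_{12}\ket{\phi^+}_{34}\}$ are homogeneous quadratic polynomials in the $\alpha_i$. The paper already asserts that the result has the $\ket{\psi_2(\cdot)}$ form, so after normalization $\lambda_i$ is a rational map $\C^4\dashrightarrow\C^4$, well-defined off a proper algebraic set. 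Consequently, for any $\vec{k}\in\{1,2,3\}^N$ the composition $\Phi_{\vec k}:=\lambda_{k_1}\circ\cdots\circ\lambda_{k_N}$ is again a rational self-map of $\C^4$, and $\Phi_{\vec k}^{-1}(B)$ is an algebraic subset of $\C^4$.

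The core of the argument is then to show that $\Phi_{\vec k}^{-1}(B)$ is a \emph{proper} algebraic subset, so that it has measure zero. It suffices to check this for $N=1$, since for longer words one can argue by induction: if $\Phi_{\vec k}^{-1}(B)\neq \C^4$, then exhibiting any single $\vec\alpha_0$ with $\lambda_{k_0}(\vec\alpha_0)\notin\Phi_{\vec k}^{-1}(B)$ shows $\Phi_{(k_0,\vec k)}^{-1}(B)\neq\C^4$. The base case reduces to showing that each of the three quadratic maps $\lambda_1,\lambda_2,\lambda_3$ is dominant, equivalently that at least one $\vec\alpha$ has $\lambda_i(\vec\alpha)\notin B$. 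This I would do by direct computation on a convenient test vector (for instance a small perturbation of $(1,1,1,1)/2$), relying on the symmetry among the three maps—permuting the roles of the singlet projections just permutes the components of the output—to reduce all three cases to one.

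Finally, the set of $\vec\alpha$ violating the conclusion of Observation \ref{obs:2} is
\begin{equation*}
\bigcup_{N\geq 1}\bigcup_{\vec k\in\{1,2,3\}^N}\Phi_{\vec k}^{-1}(B),
\end{equation*}
a countable union of measure-zero algebraic subsets of $\C^4$, hence of measure zero, establishing the claim. The main obstacle is the explicit verification of the base case: carrying out the Bell-basis expansion of $\Lambda_i(\psi_2(\vec\alpha))$ cleanly enough to read off that the resulting quadratic map is not identically valued in the non-generic locus $B$. Once dominance of each $\lambda_i$ is established, the remainder of the argument is essentially measure theory together with an induction on the word length $N$.
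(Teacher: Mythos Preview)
Your overall strategy---reducing to a countable union of preimages of the non-generic locus $B$ under rational self-maps---is sound in outline, but the induction step as written does not close. You need $\Phi_{\vec k}^{-1}(B)$ to be a \emph{proper} algebraic subset for every word $\vec k$, and you try to propagate this from $N$ to $N+1$ by finding $\vec\alpha_0$ with $\lambda_{k_0}(\vec\alpha_0)\notin\Phi_{\vec k}^{-1}(B)$. For that you invoke the base case, but the two conditions you call ``equivalent'' there are not: ``$\lambda_i$ is dominant'' (image Zariski-dense in $\C^4$) is strictly stronger than ``some $\vec\alpha$ has $\lambda_i(\vec\alpha)\notin B$''. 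Your induction genuinely requires the dominant version---only then does $\lambda_{k_0}^{-1}(V)\neq\C^4$ follow for an \emph{arbitrary} proper $V=\Phi_{\vec k}^{-1}(B)$. Unfortunately the dominant version is false: a direct computation shows that the image of every $\lambda_i$ lies in the hyperplane $\{\gamma_2=0\}\subset\C^4$, essentially because each projection in the definition of $\Lambda_i$ is onto the singlet $\ket{\psi^-}\propto(\sigma_2\otimes\one)\ket{\phi^+}$. Hence none of the $\lambda_i$ is dominant as a self-map of $\C^4$, and with only the weak base case nothing prevents $\Phi_{\vec k}^{-1}(B)$ from containing that entire hyperplane, in which case your induction stalls.

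The paper's proof avoids this by working inside the hyperplane from the start. It identifies an explicit set $\mathcal P$ of parameter vectors with $\gamma_2=0$ satisfying the genericity constraints, checks that each $\lambda_i$ sends almost every $\vec\alpha\in\C^4$ into $\mathcal P$, and---this is the key step replacing your induction---that each $\lambda_i$ sends almost every element of $\mathcal P$ back into $\mathcal P$. Once one application lands you in $\mathcal P$, all further applications keep you there, and every state in $\mathcal P$ already has stabilizer $\{\sigma_i^{\otimes4}\}$. Your argument can be repaired along the same lines: after observing that the image of each $\lambda_i$ is confined to $H=\{\gamma_2=0\}$, run the dominance/preimage argument \emph{inside} $H$ rather than in all of $\C^4$; the needed input is then that $\lambda_i|_H:H\dashrightarrow H$ is dominant, which is essentially the paper's invariance statement for $\mathcal P$.
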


\proof{
 It is easy to see that, for almost every $\vec{\alpha} \in \C^4$, $\ket{\psi_2(\lambda_i(\vec{\alpha}))}$ is a generic element of the following set of four-qubit states
\begin{widetext}
\begin{align*}
 \mathcal{P} \equiv \left \{\sum_{i=0}^3 \beta_i \sigma_i^{(0)} \sigma_i^{(1)} \ket{\phi^+}_{02}\ket{\phi^+}_{13} \ s.t. \ \beta_i \in \C, \sum_{i=0}^3 |\beta_i|^2 = 1, \beta_2 = 0,\beta_i^2 \neq \beta_j^2 \ \text{for} \ i \neq j, \nexists \ q \in \C\setminus\{1\} \ \text{s.t.} \ \{\beta_i^2\}_{i=0}^3 = \{q \beta_i^2\}_{i=0}^3 \right \},
\end{align*}
\end{widetext}
for $i = 1,2,3$. It has been shown in \cite{SpdV16} that the stabilizer of the states in $\mathcal{P}$ is $\{\sigma_i^{\otimes 4}\}_{i=0}^3$.
Thus, $S_{\psi_2(\lambda_{k_1}(\vec{\alpha}))} = \{\sigma_i^{\otimes 4}\}_{i=0}^3$, which
proves the observation for $N = 1$. It is easy to show that $\mathcal{P} \subset \Lambda_i(\mathcal{P})$ and that $\Lambda_i$ maps almost all elements of $P$ (up to normalization) to an element of $P$, for $i=1,2,3$.
Hence, the state $\ket{\psi_2((\bigcirc_{i=1}^N\lambda_{k_i})(\vec{\alpha}))} \propto (\bigcirc_{i=2}^N\Lambda_{k_i})(\psi_2(\lambda_{k_1}(\vec{\alpha})))$ is, for almost every normalized $\vec{\alpha} \in \C^4$, an element of $\mathcal{P}$ with a stabilizer of the form $\{\sigma_i^{\otimes 4}\}$. This completes the proof of the observation. \qed
}\\

We can now use these observations to prove Lemma \ref{lem:2mstates}, which we state here again.\\

\noindent {\bf Lemma 6.} \hspace{-0.35cm}
\emph{
 For almost every normalized $\vec{\alpha} \in \C^4$ the stabilizer of the recursively defined $2^m$-qubit state
 \begin{align*}
 \ket{\psi_m(\vec{\alpha})} \propto \frac{1}{2}(\one + U_{0,2^{m-1}}) \ket{\psi_{m-1}(\vec{\alpha})}^{\otimes 2}, \ m > 2,
\end{align*}
 is $S_{\psi_m(\vec{\alpha})} = \{\sigma_i^{\otimes 2^m}\}_{i=0}^3$.
}

\proof{We prove the lemma by induction over $m$. That is, we first show that it is correct for $m=3$, i.e. $S_{\psi_3(\vec{\alpha})} = \{\sigma_i^{\otimes 8}\}_{i=0}^3$, and then we show that
$S_{\psi_{m+1}(\vec{\alpha})} = \{\sigma_i^{\otimes 2^{m+1}}\}_{i=0}^3$ if $S_{\psi_k(\vec{\alpha})} = \{\sigma_i^{\otimes 2^k}\}_{i=0}^3$ for almost all normalized $\vec{\alpha} \in \C^4$ and for all $2 \leq k < m+1$.\\

Let us first show that the lemma is correct for $m=3$. Using that $S_{\psi_2(\vec{\alpha})} = \{\sigma_i^{\otimes 4}\}_{i=0}^3$ for generic $\vec{\alpha} \in \C^4$,
Observation \ref{obs:2} implies that any $S \in S_{\psi_3(\vec{\alpha})}$ is of the form $S = \bigotimes_{l=0}^{1} (S^{(4l)} \otimes S_1 \otimes S_2 \otimes S_3)$.
Using that
$X \otimes X\ket{\psi^-} \propto \ket{\psi^-}$ for any invertible $X$ the eigenvalue equation
$S\ket{\psi_3(\vec{\alpha})} = \ket{\psi_3(\vec{\alpha})}$ projected on $\ket{\psi^-}_{1,5}\ket{\psi^-}_{2,6}$ leads to
\begin{align}
 &(S^{(0)} \otimes S_3 \otimes S^{(4)} \otimes S_3)\bra{\psi^-}_{1,5}\bra{\psi^-}_{2,6}\ket{\psi_3(\vec{\alpha})} \nonumber \\
 &\propto \bra{\psi^-}_{1,5}\bra{\psi^-}_{2,6}\ket{\psi_3(\vec{\alpha})} \label{eq:EVEq3}.
\end{align}
It is easy to see that
\begin{align*}
\bra{\psi^-}_{1,5}\bra{\psi^-}_{2,6}\ket{\psi_3(\vec{\alpha})} \propto \Lambda_3(\psi_2(\vec{\alpha})) \propto \ket{\psi_2(\lambda_3(\vec{\alpha}))},
\end{align*}
and hence Observation \ref{obs:2}
states that $S_{\psi_2(\lambda_3(\vec{\alpha}))} = \{\sigma_i^{\otimes 4}\}$ for generic $\vec{\alpha}$.\\

Combined with Eq. (\ref{eq:EVEq3}) we obtain $S^{(0)} \otimes S_3 \otimes S^{(4)} \otimes S_3 \propto \sigma_i^{\otimes 4}$. Analogously, we can apply $\Lambda_2, \Lambda_1$ to obtain that any symmetry
has to be proportional to some $\sigma_i^{\otimes 4}$ on the remaining parties. Combining all these facts we have that $S_{\psi_3(\vec{\alpha})}  = \{\sigma_i^{\otimes 8}\}_{i=0}^3$ and
therefore the lemma is proven for $m=3$.\\

Let us complete the prove by showing that the lemma is correct for $m+1$, if it is valid for
all $k < m+1$. Due to Observation \ref{obs:1} we know that any symmetry of $\ket{\psi_{m+1}(\vec{\alpha})}$ is of the form
$S = \bigotimes_{l=0}^{2^{m-1}} (S^{(4l)} \otimes S_1 \otimes S_2 \otimes S_3)$.
In order to show now that the stabilizer of $\ket{\psi_{m+1}(\vec{\alpha})}$ is $\{\sigma_i^{\otimes 2^{m+1}}\}_{i=0}^3$, we
proceed analogously to before. We project half of the qubits on both sides of the eigenvalue equation $S\ket{\psi_{m+1}(\vec{\alpha})} = \ket{\psi_{m+1}(\vec{\alpha})}$ onto singlet states
to relate the case $m+1$ to the case $m$. More precisely, we project onto the state
\begin{align}
 \ket{\Phi} = \bigotimes_{l=0}^{2^{m-2}-1} \ket{\psi^-}_{8l+1,8l+5}\ket{\psi^-}_{8l+2,8l+6},
\end{align}
and obtain the eigenvalue equation
\begin{align}
\bk{\Phi|S|\psi_{m+1}(\vec{\alpha})} = \bk{\Phi|\psi_{m+1}(\vec{\alpha})}. \label{eq:EVEq2}
\end{align}

\begin{widetext}
As shown in the proof of Observation \ref{obs:2}, we have
\begin{align*}
 &\bra{\psi^-}_{8l+1,8l+5}\bra{\psi^-}_{8l+2,8l+6}\ket{\psi_{3}(\vec{\alpha})}_{8l,\ldots,8l+7} \propto \Lambda_3(\psi_2(\vec{\alpha}))_{8l,\ldots,8l+7} \propto \ket{\psi_{2}(\lambda_3(\vec{\alpha}))}_{8l,8l+3,8l+4,8l+7}.
\end{align*}
Hence, Eq. (\ref{eq:EVEq2}) can be written as
\begin{align}
 &\bk{\Phi|S|\psi_{m+1}(\vec{\alpha})} = \bk{\Phi|SK(m+1,3)|\psi_{3}(\vec{\alpha})}^{\otimes 2^{m-2}} \propto \tilde{S} \bk{\Phi|K(m+1,3)|\psi_{3}(\vec{\alpha})}^{\otimes 2^{m-2}} \\
 & \propto \tilde{S} K(m+1,3) \ket{\psi_{2}(\lambda_3(\vec{\alpha}))}^{\otimes 2^{m-2}} \propto \tilde{S}\ket{\psi_m(\lambda_3(\vec{\alpha}))} = \ket{\psi_m(\lambda_3(\vec{\alpha}))}, \label{eq:RedEVEq}
\end{align}
where $\tilde{S}$ is a local symmetry on the remaining particles.
\end{widetext}
Here, we used the easily verifiable identity (see in Fig. \ref{fig:1} and Fig. \ref{fig:2})
\begin{align}
 K(m+1,3) \ket{\psi_{2}(\lambda_3(\vec{\alpha}))}^{\otimes 2^{m-2}} = \ket{\psi_m(\lambda_3(\vec{\alpha}))}.
\end{align}

From Eq. (\ref{eq:RedEVEq}) we have that the symmetry on the remaining particles, $\tilde{S}$, has to fulfill
\begin{align}
 \tilde{S}\ket{\psi_m(\lambda_3(\vec{\alpha}))} = \ket{\psi_m(\lambda_3(\vec{\alpha}))}, \label{eq:RedEVEq2}
\end{align}
 where, according to Observation \ref{obs:2}, $\lambda_3(\vec{\alpha})$ is such that $\ket{\psi_{2}(\lambda_3(\vec{\alpha}))}$ has the stabilizer $\{\sigma_i^{\otimes 4}\}_{i=0}^3$.
 In an analogous manner we can project on particles $8l+1,8l+5$ and $8l+3,8l+7$ or $8l+2,8l+6$ and $8l+3,8l+7$, for $l \in \{0,...,2^{m-2}-1\}$ and obtain the states
 $\ket{\psi_m(\lambda_2(\vec{\alpha}))}$ and $\ket{\psi_m(\lambda_1(\vec{\alpha}))}$, respectively. Before we use the induction assumption we have to make sure that we perform a
 valid inductive step if we reduce the state $\ket{\psi_{m+1}(\vec{\alpha})}$ to the states $\ket{\psi_m(\lambda_i(\vec{\alpha}))}$, $i = 1,2,3$. This is the case only if these reductions can also be applied to
 the states $\ket{\psi_m(\lambda_i(\vec{\alpha}))}$ themselves. It is easy to see that this condition is equivalent to the fact that the initial parameters $\vec{\alpha}$ are such that
 \begin{align}
 S_{\psi_2((\bigcirc_{i=1}^N\lambda_{k_i})(\vec{\alpha}))} = \{\sigma_i^{\otimes 4}\}_{i=0}^3
 \end{align}
 for all $1 \leq N \leq m$ and all $\vec{k} = (k_1,\ldots,k_N) \in \{1,2,3\}^N$. For almost every normalized $\vec{\alpha}$ this is, however, precisely the content of
 Observation \ref{obs:2}. We can therefore use the induction assumption that $\ket{\psi_m(\lambda_3(\vec{\alpha})}$
 only has symmetries of the form $\sigma_i^{\otimes 2^m}$ and hence Eq. (\ref{eq:RedEVEq2}) is only fulfilled if $\tilde{S} \propto \sigma_{i_3}^{\otimes 2^m}$ for some $i_3 \in \{0,1,2,3\}$. We can draw
 analogous conclusions for the projections on the other particles, i.e. we obtain that, on the remaining particles that remain after these projections, the symmetry is of the form $\sigma_{i_k}^{\otimes 2^m}$ for some $i_k \in \{0,1,2,3\}$ and $k = 1,2$.
 Note that the particle with index $0$ is a remaining particle for all of these projections and hence $i_1 = i_2 = i_3$. We therefore obtain
 $S \in \{\sigma_i^{\otimes 2^{m+1}}\}_{i=0}^3$, i.e.
 $S_{\psi_{m+1}(\vec{\alpha})} = \{\sigma_i^{\otimes 2^{m+1}}\}_{i=0}^3$, if
$S_{\psi_{k}(\vec{\alpha})} = \{\sigma_i^{\otimes 2^{k}}\}_{i=0}^3$ for all $2\leq k < m+1$ and almost every normalized $\vec{\alpha} \in \C^4$. This completes the proof. \qed

\begin{figure}
 \includegraphics[width=0.45\textwidth]{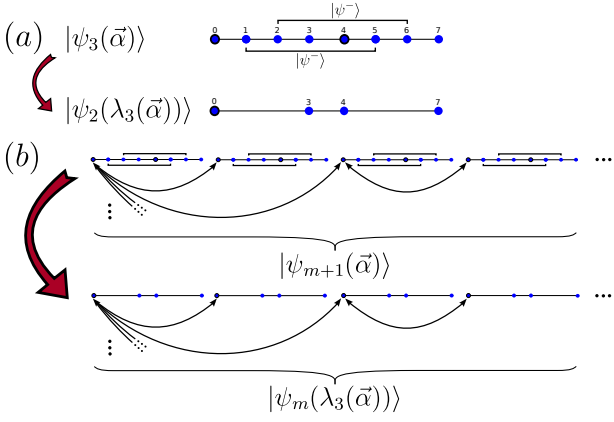}
 \caption{This figure depicts the projection operation that are used in the proof of Lemma \ref{lem:2mstates}. In (a) one can see how the eight-qubit state
 $\ket{\psi_3(\vec{\alpha})}$ is mapped on a four-qubit state $\ket{\psi_2(\lambda_3(\vec{\alpha}))}$ by projecting each of the pairs of qubits $1,5$ and $2,6$ on the singlet state
 $\ket{\psi^-}$. In (b) the result of (a) is used to map $\ket{\psi_{m+1}(\vec{\alpha})}$ to $\ket{\psi_m(\lambda_3(\vec{\alpha}))}$. We build the former state from $2^{m+1-3}$ copies
 of $\ket{\psi_3(\vec{\alpha})}$ on which the operation $K(m+1,3)$ acts. Then projections on the qubits $1,5$ and $2,6$ of each eight-qubit state are
 performed as shown in (b). Using the result of (a) it is easy to see
 that the resulting global state is $\ket{\psi_m(\lambda_3(\vec{\alpha}))}$.}
 \label{fig:2}
\end{figure}
}

\end{appendix}


\begin{thebibliography}{widest-label}

\bibitem{nielsenchuang} M.A. Nielsen, and I.L. Chuang, \textit{Quantum Computation and Quantum Information} (Cambridge University Press, New York, 2000).

\bibitem{reviews} See e.\ g.\ R. Horodecki \textit{et al.}, Rev. Mod. Phys. \textbf{81}, 865 (2009) and references therein.

\bibitem{RaBr01} R. Raussendorf and H.J. Briegel, Phys. Rev. Lett. \textbf{86}, 5188 (2001).

\bibitem{reviewmet} V. Giovannetti, S. Lloyd, and L. Maccone, Science \textbf{306}, 1330-1336 (2004).

\bibitem{SecretSh} M. Hillery, V. Bu\v{z}ek, and A. Berthiaume Phys. Rev. A \textbf{59}, 1829 (1999); D. Gottesman Phys. Rev. A 61, 042311 (2000) and references therein.

\bibitem{AmFa08} For a review see e.g. L. Amico, R. Fazio, A. Osterloh, V. Vedral, Rev. Mod. Phys. {\bf 80}, 517
(2008) and references therein.

\bibitem{orus} See e.g. R. Orus, Ann. Phys. \textbf{349}, 117 (2014) and references therein.

\bibitem{chitambar1} See e.\ g.\ M. J. Donald, M. Horodecki, and O. Rudolph, J. Math. Phys., {\bf 43}, 4252 (2002); E. Chitambar, W. Cui, and H.-K-. Lo, Phys. Rev. Lett. \textbf{108}, 240504 (2012); E. Chitambar, D. Leung, L. Mancinska, M. Ozols, and A. Winter, Commun. Math. Phys. \textbf{328}, 303 (2014).

\bibitem{chit11} E. Chitambar, Phys. Rev. Lett. \textbf{107}, 190502 (2011).

\bibitem{rains} E. M. Rains, arXiv:quant-ph/9707002 (1997).

\bibitem{sepnotlocc} C. H. Bennett, D. P. DiVincenzo, C. A. Fuchs, T. Mor, E. Rains, P. W. Shor, J. A. Smolin, and W. K. Wootters, Phys. Rev. A \textbf{59}, 1070 (1999); M. Kleinmann, H. Kampermann, and D. Bru{\ss}, Phys. Rev. A \textbf{84}, 042326 (2011).

\bibitem{LoPopescu} H.-K. Lo and S. Popescu, Phys. Rev. A \textbf{63}, 022301 (2001).

\bibitem{nielsen} M. A. Nielsen, Phys. Rev. Lett. \textbf{83}, 436 (1999).

\bibitem{gheorghiu} V. Gheorghiu and R. B. Griffiths, Phys. Rev. A \textbf{78}, 020304 (R) (2008).

\bibitem{turgutghz} S. Turgut, Y. G\"ul, and N.K. Pak, Phys. Rev. A \textbf{81}, 012317 (2010).

\bibitem{turgutw} S. Kintas and S. Turgut, J. Math. Phys. \textbf{51}, 092202 (2010).

\bibitem{dVSp13} J. I. de Vicente, C. Spee, and B. Kraus, Phys. Rev. Lett. \textbf{111}, 110502 (2013).

\bibitem{SaSc15} D. Sauerwein, K. Schwaiger, M. Cuquet, J. I. de Vicente, and B. Kraus, Phys. Rev. A \textbf{92}, 062340 (2015).

\bibitem{SpdV16} C. Spee, J. I. de Vicente, and B. Kraus, J. Math. Phys. \textbf{57}, 052201 (2016).

\bibitem{HeSp15} M. Hebenstreit, C. Spee, and B. Kraus, Phys. Rev. A \textbf{93}, 012339 (2016).

\bibitem{Gour} G. Gour and N.R. Wallach, New J. Phys. \textbf{13}, 073013 (2011).

\bibitem{cohen} S. M. Cohen, arXiv:1606.00029 (2016).

\bibitem{short} C. Spee, J.I. de Vicente, D. Sauerwein, and B. Kraus, arXiv:1606.04418v1 (2016).

\bibitem{bookWallach} N.R. Wallach, \emph{Geometric invariant theory over the real and complex numbers}, Springer, in preparation.

\bibitem{slocc} W. D\"ur, G. Vidal, and J.I. Cirac, Phys. Rev. A \textbf{62}, 062314 (2000); F. Verstraete, J. Dehaene, B. De Moor, and H. Verschelde, Phys. Rev. A, \textbf{65}, 052112 (2002).

\bibitem{BrLu04} E. Briand, J.-G. Luque, J.-Y. Thibon, and F. Verstraete, J. Math. Phys. \textbf{45}, 4855  (2004).

\bibitem{GoWa} G. Gour and N.R. Wallach, J. Math. Phys. \textbf{51}, 112201 (2010).

\bibitem{GourSpek06} G. Gour, and R. W. Spekkens, Phys. Rev. A \textbf{73}, 062331 (2006).

\bibitem{ScSa15} K. Schwaiger, D. Sauerwein, M. Cuquet, J. I. de Vicente, and B. Kraus, Phys. Rev. Lett. \textbf{115}, 150502 (2015).

\bibitem{mps} D. Perez-Garcia, F. Verstraete, M. M. Wolf, and J. I. Cirac, Quantum Inf. Comput. \textbf{7}, 401 (2007).

\bibitem{VeCi04} F. Verstraete and J. I. Cirac, Phys. Rev. A \textbf{70}, 060302(R) (2004); F. Verstraete and J. I. Cirac, arXiv:cond-mat/0407066 (2004).

\bibitem{GoKr16} G. Gour, B. Kraus, and N. Wallach,  quant-ph/1609.01327 (2016).


\end{thebibliography}
\end{document}